\LetLtxMacro{\ORIGselectlanguage}{\selectlanguage}
\DeclareRobustCommand{\selectlanguage}[1]{%
  \@ifundefined{alias@\string#1}
    {\ORIGselectlanguage{#1}}
    {\begingroup\edef\x{\endgroup
       \noexpand\ORIGselectlanguage{\@nameuse{alias@#1}}}\x}%
}
\newcommand{\definelanguagealias}[2]{%
  \@namedef{alias@#1}{#2}%
}
\newcommand{\be}{\begin{equation}}
\newcommand{\ee}{\end{equation}}
\newcommand{\bea}{\begin{eqnarray}}
\newcommand{\eea}{\end{eqnarray}}
\renewcommand{\vec}[1]{{\bf #1}}
\newcommand{\im}{{\rm i}}
\newtheorem{theorem}{Theorem}
\newtheorem{corollary}{Corollary}[theorem]
\newcommand{\printfnsymbol}[1]{%
  \textsuperscript{\@fnsymbol{#1}}%
}
\begin{document}
\title{Recursive greedy initialization of the quantum approximate optimization algorithm with guaranteed improvement}

\author{Stefan H. Sack  \orcidlink{0000-0001-5400-8508}}

\author{Raimel A. Medina \orcidlink{0000-0002-5383-2869}}
\affiliation{Institute of Science and Technology Austria (ISTA), Am Campus 1, 3400 Klosterneuburg, Austria}

\author{Richard Kueng \orcidlink{0000-0002-8291-648X}}
\affiliation{Institute for Integrated Circuits, Johannes Kepler University Linz, Altenberger Straße 69, Austria}

\author{Maksym Serbyn \orcidlink{0000-0002-2399-5827}}
\affiliation{Institute of Science and Technology Austria (ISTA), Am Campus 1, 3400 Klosterneuburg, Austria}

\date{\today}
\begin{abstract} 
The quantum approximate optimization algorithm (QAOA) is a variational quantum algorithm, where a quantum computer implements a variational ansatz consisting of $p$ layers of alternating unitary operators and a classical computer is used to optimize the variational parameters. For a random initialization, the optimization typically leads to local minima with poor performance, motivating the search for initialization strategies of QAOA variational parameters. Although numerous heuristic initializations exist, an analytical understanding and performance guarantees for large $p$ remain evasive. We introduce a greedy initialization of QAOA which guarantees improving performance with an increasing number of layers. Our main result is an analytic construction of $2p+1$ {\it transition states} --- saddle points with a unique negative curvature direction --- for QAOA with $p+1$ layers that use the local minimum of QAOA with $p$ layers.  Transition states connect to new local minima, which are guaranteed to lower the energy compared to the minimum found for $p$ layers. We use the \textsc{Greedy} procedure to navigate the exponentially increasing with $p$ number of local minima resulting from the recursive application of our analytic construction. The performance of the \textsc{Greedy} procedure matches available initialization strategies while providing a guarantee for the minimal energy to decrease with an increasing number of layers $p$. 
\end{abstract}
\maketitle

\section{Introduction}
The quantum approximate optimization algorithm (QAOA)~\cite{farhi2014quantum} is a prospective near-term quantum algorithm for solving hard combinatorial optimization problems on Noisy Intermediate-Scale Quantum (NISQ)~\cite{preskill2018quantum} devices. In this algorithm, the quantum computer is used to prepare a variational wave function that is updated in an iterative feedback loop with a classical computer to minimize a cost function (the energy expectation value), which encodes the computational problem. A common bottleneck of the QAOA is the convergence of the optimization procedure to one of the many low-quality local minima, whose number increases exponentially with the QAOA circuit depth $p$~\cite{zhou2018quantum, sack2021quantum}. 

Much effort has been devoted to finding good initialization strategies to prevent convergence to such low-quality local minima. Researchers have proposed to: first solve a relaxed classical optimization problem and to use that as an initial guess~\cite{egger2021warmstartingquantum}, to use machine learning  to infer patterns in the optimal parameters~\cite{jain2021graph}, 
interpolating optimal parameters between different circuit depths~\cite{zhou2018quantum}, or to use the parallels between the QAOA and quantum annealing~\cite{sack2021quantum}. 
Recently the success of the interpolation strategies that appeal to annealing was attributed to the ability of the QAOA to effectively speed up adiabatic evolution via the so-called counterdiabatic mechanism~\cite{wurtz2022counterdiabaticity}. This result was used to explain cost function concentration for typical instances and concentration of optimal, typically smoothly varying, parameters, which was previously introduced on Ref.~\cite{brandao2018fixed} and~\cite{akshay2021paramter} respectively. 

Despite this progress, all proposed initialization strategies remain heuristic or physically motivated at best, and our understanding of the QAOA optimization remains limited. One of the main puzzles is the exponential improvement of the QAOA performance with circuit depth $p$, observed numerically~\cite{zhou2018quantum,crooks2018performance}. Here we propose an analytic approach that relates QAOA properties at circuit depths $p$ and $p+1$. The recursive application of our result leads to a QAOA initialization scheme that guarantees improvement of performance with $p$. 

Our analytic approach relies on the consideration of stationary points of QAOA cost function beyond local minima. Inspired by the theory of energy landscapes~\cite{wales_2004}, we focus on stationary configurations with a unique unstable direction, known as \emph{transition states} (TS).
We show that $2p+1$ distinct TS can be constructed \emph{analytically} for a QAOA at circuit depth $p+1$ (denoted as QAOA$_{p+1}$) from minima at circuit depth $p$. All these TS for QAOA$_{p+1}$ exhibit the same energy as the QAOA$_{p}$-minimum from which they are constructed, 
thus providing a good initialization for QAOA$_{p+1}$.  Descending in the negative curvature direction connects each of the $2p+1$ TS to two local minima of QAOA$_{p+1}$, which are thus guaranteed to exhibit lower energy than the initial minima of QAOA$_{p}$. Iterating this procedure leads to an exponentially  increasing (in $p$) number of local minima which are guaranteed to have a lower energy at circuit depth $p+1$ than at $p$~\cite{Note1}. We visualize this hierarchy of minima and their connections in a graph and propose a \textsc{Greedy} approach to explore its structure. We numerically show that optimal parameters at every circuit depth $p$ are smooth (i.e. the variational parameters change only slowly between circuit layers) and directly connect to a smooth parameter solution at $p+1$ through the TS. Our results explain existing QAOA initializations and establish a recursive analytic approach to study QAOA.

The rest of the paper is organized as follows. In Section~\ref{sec:qaoa_landscape} we review the QAOA, present newly found symmetries, and introduce the analytical construction of TS. In Section~\ref{sec:ts_initialization} we show how TS can be used as an initialization to systematically explore the QAOA optimization landscape. From this, we introduce a new heuristic method, dubbed \textsc{Greedy} for exploring the landscape and provide a comparison to popular optimization strategies. Finally, in Section~\ref{sec:discussion} we discuss our results and potential future extensions of our work. Appendices~\ref{appx:symmetries}-\ref{App:last} present detailed proofs of our analytical results, as well as supporting numerical simulations. 

\section{QAOA optimization landscape} \label{sec:qaoa_landscape}
\subsection{MaxCut problem on random regular graphs}
The QAOA was originally proposed for a graph partitioning problem, known as finding the maximal cut (\textsc{MaxCut})~\cite{farhi2014quantum} and has also been applied to a variety of other optimization problems~\cite{streif2021beating, farhi2020needs, marwaha2021bounds}. 
\textsc{MaxCut} seeks for a partition of the given undirected graph $\mathcal{G}$ into two groups such that the number of edges $E$ that connect
vertices from different groups are maximized. 
Finding the solution of \textsc{MaxCut} for a graph with $n$ vertices is equivalent to finding a ground state for the $n$-qubit classical Hamiltonian $H_C = \sum_{\langle i,j\rangle \epsilon E} \sigma^z_i \sigma^z_j$, with the sum running over a set of graph edges $E$ and $\sigma^z_i$ being the Pauli-Z matrix acting on the $i$-th qubit. 

The depth-$p$ QAOA algorithm~\cite{farhi2014quantum} minimizes the expectation value of the classical Hamiltonian over the variational state $\ket{ \bm{\beta}, \bm{\gamma}}$ with angles $\bm{\beta}=(\beta_1,\ldots,\beta_p)$ and $\bm{\gamma}=(\gamma_1,\ldots,\gamma_p)$ shown in  Fig.~\ref{fig:1}(a):
\begin{equation}\label{eq:QAOA_ansatz}
    \ket{\bm{\beta}, \bm{\gamma}} = \prod_{i=1}^{p} e^{-\im \beta_i H_B} e^{-\im \gamma_i H_C} \ket{+}^{\otimes n}.
\end{equation}
Here $H_B = -\sum_i^n \sigma^x_i$ is the mixing Hamiltonian and the  circuit depth $p$ controls the number of applications of the classical and mixing Hamiltonian. 
The initial product state $\ket{+}^{\otimes n}$, where all qubits point in the $x$-direction is an equal superposition of all possible graph partitions which is also the ground state of $H_B$. 

Finding the minimum of $E(\bm{\beta}, \bm{\gamma})=\bra{\bm{\beta}, \bm{\gamma}} H_C \ket{\bm{\beta}, \bm{\gamma}}$ over angles 
$(\beta_1,\ldots,\beta_p)$ and 
$(\gamma_1,\ldots,\gamma_p)$
that form a set of $2p$ variational parameters,  $(\bm{\beta}, \bm{\gamma})$, yields a desired approximation to the ground state of $H_C$, equivalent to an approximate a solution of \textsc{MaxCut}. The scalar function $E(\bm{\beta}, \bm{\gamma})$ thus defines a $2p$-dimensional energy landscape where the QAOA  seeks to find the best minimum. The performance of the QAOA is typically reported in terms of the approximation ratio $r_{\bm{\beta}, \bm{\gamma}}=E(\bm{\beta}, \bm{\gamma})/C_{\min}$, where $
C_{\min}$ is the cost function value for the \textsc{MaxCut}. Symmetries of the QAOA ansatz when restricted to graphs with only odd connectivity, such as random 3-regular graphs (RRG3) used in this work, restrict the parameter range to the following fundamental region:
\begin{equation}\label{eq:symm}
 \beta_{i} \in \bigg[-\frac{\pi}{4},\frac{\pi}{4}\bigg]; 
 \ \,
 \gamma_1 \in \bigg(0,\frac{\pi}{4} \bigg),
 \ \, 
\gamma_{j} \in \bigg[-\frac{\pi}{4},\frac{\pi}{4} \bigg],
\end{equation}
with $i \in [1,p]$ and $j \in [2,p]$. Note that
the fundamental region presented above is smaller than what has been previously reported~\cite{zhou2018quantum, wang2018fermionic}, see the Appendix~\ref{appx:symmetries} for details.

\begin{figure}[tb]
    \centering
    \includegraphics[width=\columnwidth]{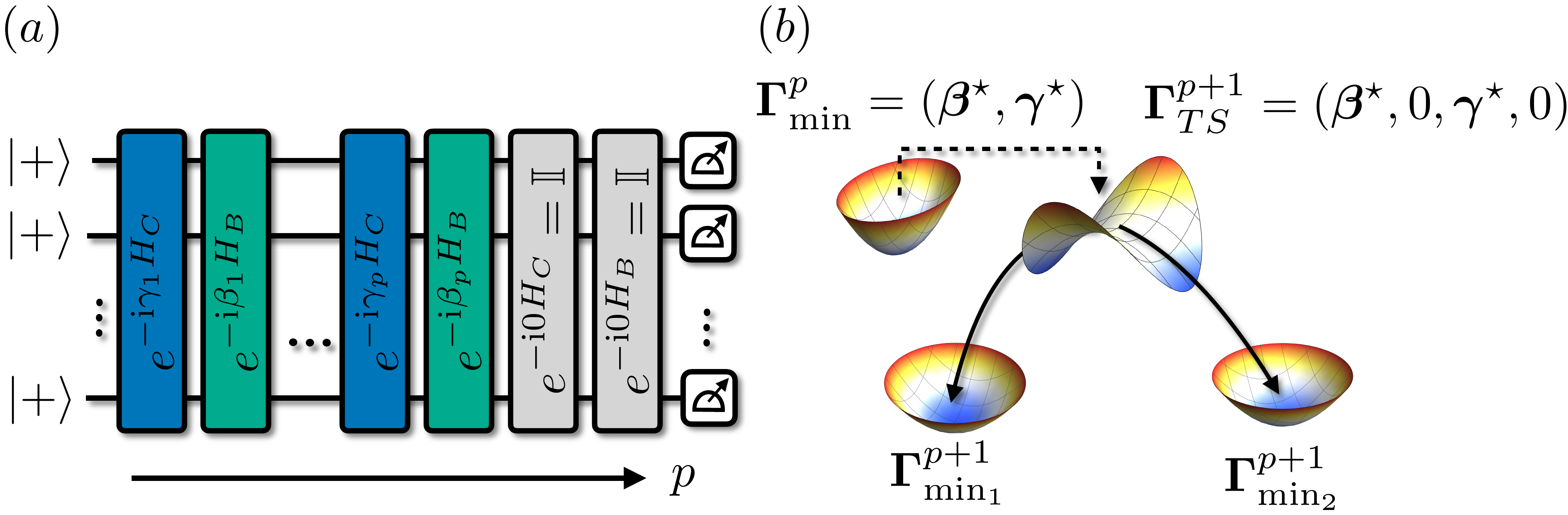}
    \caption{(a) Circuit diagram that implements the QAOA ansatz state with circuit depth $p$, see Eq.~(\ref{eq:QAOA_ansatz}). Gray boxes indicate the identity gates that are inserted when constructing a TS, as indicated in Theorem~\ref{th:ts}. (b) Local minima $\bm{\Gamma}_{\min}^p$ of QAOA$_p$ generate  a TS  $\bm{\Gamma}_{\text{TS}}^{p+1}$ for QAOA$_{p+1}$ that connects to two \emph{new local minima}, $\bm{\Gamma}_{\min_{1,2}}^{p+1}$ with lower energy.}
    \label{fig:1}
\end{figure}

\subsection{Energy minima and transition states}
Previous studies of the QAOA landscape were restricted to local minima of the cost function $E(\bm{\beta}, \bm{\gamma})$, since they can be directly obtained using standard gradient-based or gradient-free optimization routines. Local minima are stationary points of the energy landscape~(defined as $\partial_i E(\bm{\beta}, \bm{\gamma})=0$ for derivative running over all $i=1,\ldots,2p$ variational angles), where all eigenvalues of the Hessian matrix $H_{ij}=\partial_i \partial_j E(\bm{\beta}, \bm{\gamma})$ are positive, that is the Hessian at the local minimum is positive-definite. However, the study of energy landscapes~\cite{wales_2004} of chemical reactions and molecular dynamics has shown that TS, which corresponds to stationary points with a single negative eigenvalue of the Hessian matrix (index-1), also plays an important role~\footnote{Note, that on physical grounds we do not consider singular Hessians that have one or more vanishing eigenvalues, see Appendix~\ref{app:ts}.}. There, TS are particularly relevant as they correspond to the highest-energy configurations along a reaction pathway. They often serve as bottlenecks in the reaction process and thus are crucial for understanding reaction rates, designing catalysts, and predicting chemical behavior. By studying the role of transition states in the QAOA landscape, we aim to uncover insights that could lead to improved optimization strategies or better convergence properties of the algorithm. This motivates the construction of TS achieved below.

\subsection{Analytic construction of transition states}

The structure of the QAOA variational ansatz allows us to analytically construct the TS of  QAOA$_{p+1}$ using any local minima of QAOA$_p$: 

\begin{theorem}[TS construction, simplified version]\label{th:ts}
Assume that we found a local minimum of QAOA$_p$ denoted as $\bm{\Gamma}_{\min}^{p}=( \bm{\beta}^\star, \bm{\gamma}^\star)=( \beta_{1}^\star, \ldots, \beta_{p}^\star, \gamma_{1}^\star, \ldots, \gamma_p^\star).$ Padding the vector of variational angles with zeros at positions $i$ and $j$, results in
\begin{equation}\label{Eq:TS-construction-main}
\begin{split}
\bm{\Gamma}^{p+1}_{\text{\rm TS}}(i, j) =(
\beta_1^\star, ..., \beta^\star_{j-1}, &0, \beta^\star_{j}, ..., \beta_p^\star, \\
\gamma_1^\star, ..., \gamma^\star_{i-1}, &0, \gamma^\star_{i}, ..., \gamma_p^\star)
\end{split}
\end{equation}
being a TS for QAOA$_{p+1}$ when $j=i$ or $j=i+1$ and $\forall i\in[1,p]$, and also for $i=j=p+1$.
\end{theorem}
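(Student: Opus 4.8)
The plan is to verify directly that $\bm{\Gamma}^{p+1}_{\text{TS}}(i,j)$ is (i) a stationary point of $E_{p+1}$ and (ii) has a Hessian with exactly one negative eigenvalue. The starting observation is that padding the angle vector with a zero at position $i$ in $\bm\gamma$ and at position $j$ in $\bm\beta$ inserts factors $e^{-\im\,0\,H_C}=\mathbb{1}$ and $e^{-\im\,0\,H_B}=\mathbb{1}$ into the circuit of Eq.~(\ref{eq:QAOA_ansatz}) (the gray identity boxes in Fig.~\ref{fig:1}(a)). When $j=i$ or $j=i+1$, the two inserted identities are adjacent (or separated only by the other inserted identity), so the prepared state $\ket{\bm\Gamma^{p+1}_{\text{TS}}}$ equals the QAOA$_p$ state at $\bm\Gamma^p_{\min}$, and hence $E_{p+1}(\bm\Gamma^{p+1}_{\text{TS}})=E_p(\bm\Gamma^p_{\min})$; the same holds for $i=j=p+1$, where both identities are appended at the front of the product acting on $\ket{+}^{\otimes n}$. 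This immediately gives the claimed energy equality once stationarity and the index-1 property are established.

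Next I would compute the gradient of $E_{p+1}$ at this point. Split the $2p+2$ partial derivatives into two groups: the $2p$ ``old'' angles inherited from $\bm\Gamma^p_{\min}$, and the two ``new'' angles sitting at the padded positions, currently set to zero. For the old angles, differentiating $E_{p+1}$ and then setting the new angles to zero collapses the circuit back to the QAOA$_p$ circuit, so each of these derivatives equals the corresponding derivative of $E_p$ at $\bm\Gamma^p_{\min}$, which vanishes because $\bm\Gamma^p_{\min}$ is a local minimum of $E_p$. For the two new angles, the key is a parity/symmetry argument: I would show that $E_{p+1}$ is an even function of each padded angle about $0$ when the adjacent identity structure is present — intuitively, because with the neighbouring layer trivial, flipping the sign of the remaining new angle can be undone by one of the QAOA symmetries (the same symmetries that produce Eq.~(\ref{eq:symm})). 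Evenness forces the first derivative in each new direction to vanish, completing stationarity. I expect the cleanest route is to use the symmetries catalogued in Appendix~\ref{appx:symmetries} rather than a brute-force expansion of nested commutators.

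Finally, the Hessian. Because the gradient in the old directions vanishes through the QAOA$_p$ minimum and the new angles enter evenly, the Hessian is block-diagonal between the old block and the new block up to the cross terms, and I would argue the cross terms vanish (again by the parity of $E_{p+1}$ in each new angle: an odd number of derivatives with respect to a new angle gives zero at the padded point). The old $2p\times 2p$ block is exactly the Hessian of $E_p$ at $\bm\Gamma^p_{\min}$, which is positive-definite by assumption (modulo the non-degeneracy caveat of the footnote). The remaining $2\times 2$ new block, and in fact each $1\times 1$ diagonal entry $\partial^2_{\text{new}}E_{p+1}$, must be analysed: the claim is that exactly one of its eigenvalues is negative and the other positive (or that together with the structure the total index is $1$). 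The main obstacle — and the technical heart of the proof — is precisely this computation of the second derivatives in the newly inserted directions: one must show that inserting a trivial layer creates exactly one descent direction, no more and no less. I would handle this by expressing $\partial^2_{\gamma_{\text{new}}}E_{p+1}$ and $\partial^2_{\beta_{\text{new}}}E_{p+1}$ at the padded point as expectation values of (anti)commutators of $H_B$ and $H_C$ conjugated by the partial QAOA circuit, and then relate their signs, using the fact that $\bm\Gamma^p_{\min}$ is already a minimum, to pin down the signature as $(+,\dots,+,-)$. The $i=j=p+1$ case is slightly special since one new direction borders the initial state $\ket{+}^{\otimes n}=$ ground state of $H_B$, which makes one of the second derivatives manifestly controlled; I would treat it as a clean limiting instance of the same argument.
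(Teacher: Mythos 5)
Your overall skeleton (stationarity plus an index-1 analysis, with the energy equality following from the inserted identity gates) matches the paper's strategy, and your treatment of the old angles is fine. However, two of your key steps are flawed. First, stationarity in the new directions does not follow from a parity argument: $E_{p+1}$ is \emph{not} an even function of a padded angle with the remaining angles held fixed, because the sign-flip symmetry of Appendix~\ref{appx:symmetries} requires compensating $\pm\pi/2$ shifts of the adjacent $\gamma$ angles, i.e.\ it moves the other coordinates. The correct mechanism (and the paper's) is that, with the other inserted angle at zero, varying a new angle is equivalent to varying an adjacent old angle, so the new first derivatives reduce to derivatives of $E_p$ at $\bm{\Gamma}^{p}_{\min}$, which vanish; only two boundary cases need separate care, $\partial_{\beta_1}$ for $i=1$ (using $H_B\ket{+}^{\otimes n}=n\ket{+}^{\otimes n}$) and $\partial_{\gamma_{p+1}}$ for $i=j=p+1$ (using that $H_C$ commutes with itself). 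Incidentally, $i=j=p+1$ pads the \emph{final} layer next to the $H_C$ measurement, not the layer adjacent to $\ket{+}^{\otimes n}$; the latter is the $i=1$ case.

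Second, and more seriously, the Hessian is not block-diagonal: the mixed old--new second derivatives do not vanish (in the bulk case they are literally two columns of $H(\bm{\Gamma}^p_{\min})$), and the same parity claim, if it were true, would also force the mixed new--new derivative $b=\partial_{\beta_{\rm new}}\partial_{\gamma_{\rm new}}E$ to vanish --- which is precisely the quantity the argument cannot do without. The index-1 property is not read off from the signs of the two new diagonal entries (one of them is exactly zero in the boundary cases); in the paper it follows from (a) the eigenvalue interlacing theorem applied to the positive-definite principal block $H(\bm{\Gamma}^p_{\min})$, which caps the number of negative eigenvalues of the enlarged Hessian at two, and (b) the computation $\det H\big[\bm{\Gamma}^{p+1}_{\rm TS}\big]=-b^{2}\,\det H(\bm{\Gamma}^{p}_{\min})<0$ whenever $b\neq 0$ (generic, since $b$ is the expectation value of a nonvanishing nested commutator of $H_B$ and $H_C$ in the variational state), which forces exactly one negative eigenvalue; the cross block is handled either because one of its columns vanishes (boundary cases) or by row reduction exploiting that it coincides with columns of the old Hessian (bulk case). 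Your proposal contains no substitute for these two ingredients, so as written the index-1 claim is not established.
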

\begin{proof}
    The argument consists of two steps. First, by relating the first derivative over newly introduced parameters to derivatives over existing angles we show that Eq.~(\ref{Eq:TS-construction-main}) is a stationary point of QAOA$_{p+1}$. More specifically, we observe that the gradient components where the zero insertion is made satisfy the following relations
\begin{equation}
\begin{split}
    \partial_{\beta_l}|\bm{\beta}, \bm{\gamma}\rangle_{\big \vert \bm{\Gamma}^{p+1}_{\text{TS}}(l, l)} &= \partial_{\beta_{l-1}}|\bm{\beta}, \bm{\gamma}\rangle_{\big \vert \bm{\Gamma}^p_{\text{min}}}, \\
    \partial_{\beta_l}|\bm{\beta}, \bm{\gamma}\rangle_{\big \vert \bm{\Gamma}^{p+1}_{\text{TS}}(l, l+1)} &= \partial_{\beta_{l}}|\bm{\beta}, \bm{\gamma}\rangle_{\big \vert \bm{\Gamma}^p_{\text{min}}}, \\
    \partial_{\gamma_l}|\bm{\beta}, \bm{\gamma}\rangle_{\big \vert \bm{\Gamma}^{p+1}_{\text{TS}}(l, l)} &= \partial_{\gamma_{l}}|\bm{\beta}, \bm{\gamma}\rangle_{\big \vert \bm{\Gamma}^p_{\text{min}}}, \\
    \partial_{\gamma_{l+1}}|\bm{\beta}, \bm{\gamma}\rangle_{\big \vert \bm{\Gamma}^{p+1}_{\text{TS}}(l, l+1)} &= \partial_{\gamma_{l}}|\bm{\beta}, \bm{\gamma}\rangle_{\big \vert \bm{\Gamma}^p_{\text{min}}}. 
\end{split}
\label{eq:grad_TS}
\end{equation}
Since $\nabla E(\bm{\beta}, \bm{\gamma})_{\big \vert \bm{\Gamma}^p_{\text{min}}}=0$, it directly follows that the TS constructed using Theorem~\ref{th:ts} are also stationary points.  
In the second step, we show that the Hessian at the TS has a single negative eigenvalue. To this end in the Appendix~\ref{app:ts} we show that we can always write the Hessian at the TS in the following form
\begin{equation}
     \label{eq:Hessian-big}
H\big[\bm{\Gamma}^{p+1}_{\text{TS}}(l, k)\big]=
\begin{pmatrix}
   H(\bm{\Gamma}^p_{\text{min}}) & v(l,k) \\
  v^T(l,k) & h(l,k)
  \end{pmatrix},
\end{equation}
where $H(\bm{\Gamma}^p_{\text{min}}) \in \mathbb{R}^{2p \times 2p}$, $v(l,k) \in \mathbb{R}^{2p \times 2}$ and, $h(l,k) \in \mathbb{R}^{2 \times 2}$. Here, the largest block $H(\bm{\Gamma}^p_{\text{min}})$ corresponds to the old Hessian at the stationary point. The matrix $h(l,k)$ corresponds to the second derivatives of the energy with respect to new parameters that are initially set to zero, whereas matrix $v(l,k)$ represents the ``mixing'' terms, with one derivative taking over the old parameters and the second derivative corresponding to one of the new parameters, which are initialized at zero. By employing this representation of the Hessian at the TS, we utilize the eigenvalue interlacing theorem ([Ref.~\cite{matrix_book}, Theorem 4 on page 117] summarized in Theorem~\ref{th:eigInterlacing}) to establish that $H[\bm{\Gamma}^{p+1}_{\text{TS}}(l, k)]$ has at most two negative eigenvalues. Subsequently, we prove that the determinant of $H[\bm{\Gamma}^{p+1}_{\text{TS}}(l, k)]$ is negative for each of the $2p+1$ transition states, which implies the presence of only one negative eigenvalue (i.e., the index-1 direction). It is important to note that this result is independent of the choice of classical Hamiltonian, which is fixed to encode \textsc{MaxCut} in this work.
\end{proof}

The simplified theorem above ignores the possibility of vanishing eigenvalues of the Hessian, which can be ruled out only on physical grounds. This issue and complete proof of the theorem are discussed in Appendix~\ref{app:ts}. 

\section{From transition states to QAOA intialization}\label{sec:ts_initialization}
\subsection{Initialization graph}
For each local minimum of QAOA${_p}$, Theorem~\ref{th:ts} provides $p+1$ symmetric TS where zeros are padded at the same position, $i=j$, like in Fig.~\ref{fig:1}(a), and additionally $p$ non-symmetric TS with $j=i+1$, where zeros are padded in adjacent layers of the QAOA circuit. Fig.~\ref{fig:1}(b) shows how one can descend from a given TS along the positive and negative index-1 direction, finding two new local minima of QAOA$_{p+1}$ with lower energy. Thus Theorem~\ref{th:ts} provides us with a powerful tool to systematically explore the local minima in the QAOA in a recursive fashion.

\begin{figure}[tb]
    \centering
    \includegraphics[width=\columnwidth]{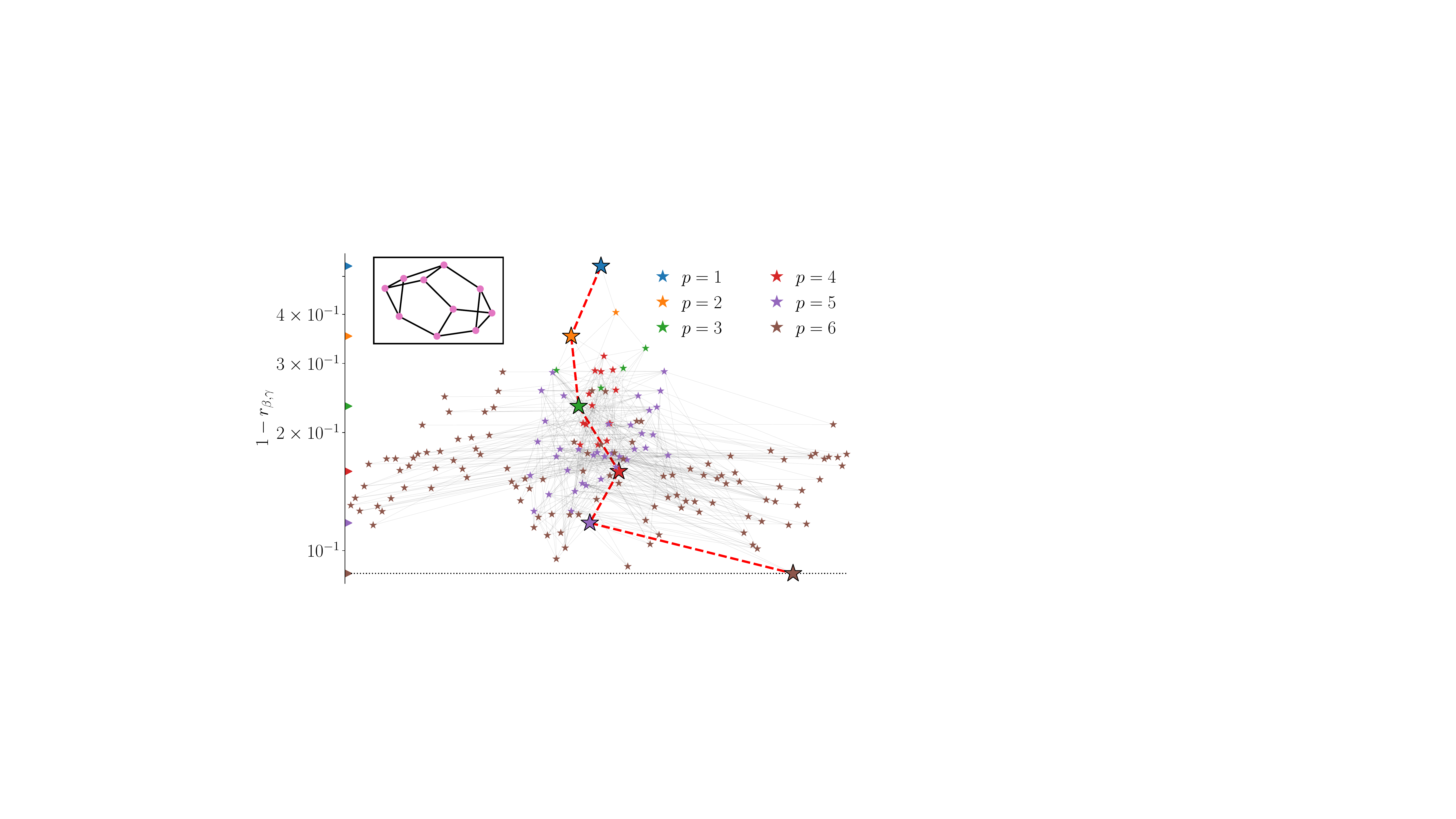}
    \caption{Initialization graph for the QAOA for \textsc{MaxCut} problem on a particular instance of RRG3 with $n=10$ vertices (inset). For each local minima of QAOA$_p$ we generate $p+1$ TS for QAOA$_{p+1}$, find corresponding minima as in Fig.~\ref{fig:1}(b), and show them on the plot connected by an edge to the original minima of QAOA$_{p+1}$. Position along the vertical axis quantifies the performance of QAOA via the approximation ratio, and points are displaced on the horizontal axis for clarity. Color encodes the depth of the QAOA circuit, and large symbols along with the red dashed line indicate the path that is taken by the \textsc{Greedy} procedure that keeps the best minima for any given $p$ resulting in an exponential improvement of the performance with $p$. The \textsc{Greedy} minimum coincides with an estimate of the global minimum for $p=6$ (dashed line) obtained by choosing the best minima from $2^p$ initializations on a regular grid.}
    \label{fig:2}
\end{figure}

Such exploration of the QAOA initializations for a particular graph with $n=10$ vertices is summarized in Fig.~\ref{fig:2}. We find a unique minimum for QAOA$_{1}$ using grid search (see Appendix~\ref{app:greedy}) in the fundamental region defined in Eq.~\ref{eq:symm} from which 
we construct two symmetric TS according to Eq.~(\ref{Eq:TS-construction-main}), descend from these TS in index-1 directions with the Broyden–Fletcher–Goldfarb–Shanno (BFGS)~\cite{bfgs1, bfgs2, bfgs3, bfgs4} algorithm, finding two new local minima of QAOA$_{2}$. These minima are connected to the minima of QAOA$_1$, since it was used to construct a TS. Repeating this procedure recursively for each of the $p+1$ symmetric TS~\footnote{Note, that we restrict only to symmetric TS since we numerically find no performance gain from including the non-symmetric TS in the initialization procedure.} we obtain the tree in Fig.~\ref{fig:2}. Assuming that all minima found in this way from symmetric TS are unique, their number would increase as $2^{p-1} p!$. Numerically, we observe that the number of unique minima is much smaller compared to the na\"ive counting, increasing approximately exponentially with $p$.

\subsection{\textsc{Greedy} maneuvering through the graph}

The exponential growth of the number of minima in QAOA depth $p$
makes the na\"ive construction and exploration of the full graph a challenging task. To deal with the rapidly growing number of minima we introduce:
\begin{corollary}[\textsc{Greedy} recursive strategy]
Using the lowest energy minimum that is found for QAOA depth $p$, we generate $2p+1$ transition states (TS) for QAOA$_{p+1}$. Each transition state corresponds to the same state in the Hilbert space as the initial local minimum, so the energy of all the transition states is the same and equal to the energy of the initial local minimum. We then optimize the QAOA parameters starting from each of these transition states and select the best new local minimum of QAOA$_{p+1}$ to iterate this procedure. This \textsc{Greedy} recursive strategy is guaranteed to lower energy at every step.
\end{corollary}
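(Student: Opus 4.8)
The plan is to derive the corollary as a direct consequence of Theorem~\ref{th:ts} together with one elementary observation about the ansatz~(\ref{eq:QAOA_ansatz}), and then close the loop by induction on $p$.

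First I would establish the ``same state, same energy'' clause. Each of the $2p+1$ transition states produced from a local minimum $\bm{\Gamma}_{\min}^{p}$ is obtained, via~(\ref{Eq:TS-construction-main}), by padding the angle vector with a zero in one $\beta$-slot and one $\gamma$-slot. In the circuit of Fig.~\ref{fig:1}(a) this inserts the identity unitaries $e^{-\im 0 H_B}=e^{-\im 0 H_C}=\mathbb{1}$, so the prepared state $\ket{\bm{\Gamma}^{p+1}_{\text{TS}}(i,j)}$ is literally the vector $\ket{\bm{\Gamma}_{\min}^{p}}$ in Hilbert space, and therefore $E(\bm{\Gamma}^{p+1}_{\text{TS}}(i,j))=\bra{\bm{\Gamma}_{\min}^{p}}H_C\ket{\bm{\Gamma}_{\min}^{p}}=E(\bm{\Gamma}_{\min}^{p})$ for all admissible $(i,j)$.

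Next I would argue that optimizing QAOA$_{p+1}$ starting from any such transition state terminates at a local minimum of strictly lower energy. By Theorem~\ref{th:ts}, $\bm{\Gamma}^{p+1}_{\text{TS}}(i,j)$ is a stationary point whose Hessian has exactly one negative eigenvalue $-|\lambda|<0$ (and, invoking the non-degeneracy caveat of Appendix~\ref{app:ts}, no zero eigenvalue). Let $u$ be a unit eigenvector for this eigenvalue; since the gradient vanishes, a Taylor expansion gives $E\big(\bm{\Gamma}^{p+1}_{\text{TS}}(i,j)+\epsilon u\big)=E(\bm{\Gamma}_{\min}^{p})-\tfrac12|\lambda|\epsilon^{2}+O(\epsilon^{3})<E(\bm{\Gamma}_{\min}^{p})$ for small $\epsilon>0$. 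Running the descent (BFGS in practice) from this perturbed point, the energy is non-increasing along the trajectory, which stays in the compact fundamental region~(\ref{eq:symm}) where the smooth function $E$ attains its extrema; hence the trajectory converges to a stationary point with positive-definite Hessian, i.e.\ a local minimum $\bm{\Gamma}^{p+1}_{\min}$ with $E(\bm{\Gamma}^{p+1}_{\min})\le E\big(\bm{\Gamma}^{p+1}_{\text{TS}}(i,j)+\epsilon u\big)<E(\bm{\Gamma}_{\min}^{p})$. Selecting the best minimum over the $2p+1$ transition states only sharpens this, so the retained $\bm{\Gamma}^{p+1}_{\min}$ satisfies $E(\bm{\Gamma}^{p+1}_{\min})<E(\bm{\Gamma}_{\min}^{p})$. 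The full \textsc{Greedy} claim then follows by induction: $\bm{\Gamma}^{p+1}_{\min}$ is again a local minimum, so Theorem~\ref{th:ts} applies to it and the step repeats, each repetition strictly lowering the energy.

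The main obstacle in making this airtight is the descent step rather than the algebra: one must rule out the optimizer remaining pinned at the stationary point $\bm{\Gamma}^{p+1}_{\text{TS}}(i,j)$ and guarantee convergence to a genuine minimum rather than stalling at another saddle or running to the boundary of~(\ref{eq:symm}). The first point is handled by the explicit perturbation along $u$ (any nonzero overlap with the negative-curvature direction suffices, and generic numerical noise supplies it); the second leans on compactness of the fundamental region and on the same physical-grounds non-degeneracy assumption already flagged for Theorem~\ref{th:ts} in Appendix~\ref{app:ts}. Note that the corollary asserts only monotone improvement of the \emph{retained} minimum; it deliberately does not claim that \textsc{Greedy} tracks the global minimum, which would additionally require that the basin containing the global optimum is never discarded.
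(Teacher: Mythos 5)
Your proposal is correct and follows essentially the same route as the paper's own proof: the transition states inherit the energy $E_p$ of the parent minimum because the inserted zero angles act as identity gates, and descending from them can only lower the energy, with strict decrease resting on the same non-vanishing-curvature assumption flagged in Appendix~\ref{app:ts}. The only difference is one of detail --- you make the strict inequality explicit via the Taylor expansion along the index-1 direction and the $\epsilon$-perturbation off the saddle, whereas the paper states the conclusion more tersely as ``lower or maintain'' the energy --- so this is a refinement of, not a departure from, the published argument.
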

\begin{proof}
Let the initial local minimum at QAOA depth $p$ have energy $E_p$. Since all the $2p+1$ transition states are generated from this minimum and have the same energy $E_p$, when we optimize the QAOA parameters for QAOA$_{p+1}$ starting from these transition states, all the converged local minima will have energy less than or equal to $E_p$. As a result, the energy can either decrease or stay the same (provided that curvature vanishes, which we do not expect on physical grounds, see Appendix~\ref{app:ts}), but it cannot increase. Therefore, the \textsc{Greedy} recursive strategy is guaranteed to lower or maintain the energy at every step.
\end{proof}
The \textsc{Greedy} path that is taken by this strategy in the initialization graph is shown in Fig.~\ref{fig:2} as a red dashed line. We can see that this heuristic allows to very effectively maneuver the increasingly complex graph with its numerous local minima and find the global minimum for circuit depths up to $p=7$. A detailed description of the algorithm is presented in Appendix~\ref{app:greedy}.
\begin{figure}[b]
    \centering
    \includegraphics[width=\columnwidth]{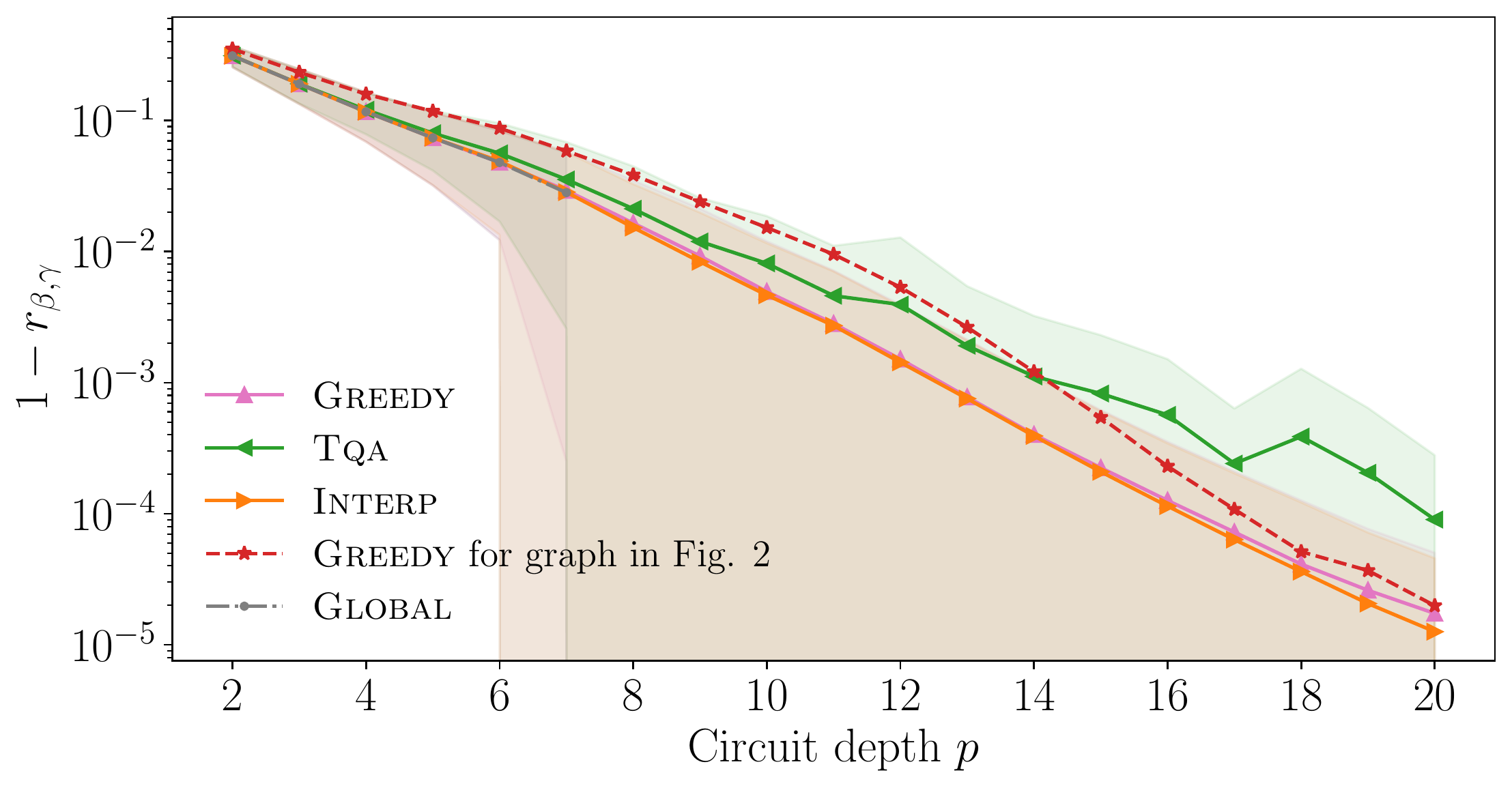}
    \caption{Performance comparison between different QAOA initialization strategies used for avoiding low-quality local minima. \textsc{Greedy} approach proposed in this work yields the same performance as \textsc{Interp}~\cite{zhou2018quantum} and slightly outperforms \textsc{TQA}~\cite{sack2021quantum} at large $p$.
    \textsc{Global} refers to the best minima found out of $2^p$ initializations on a regular grid. Data is averaged over 19 non-isomorphic RRG3 with  $n=10$, shading indicates standard deviation. System size scaling for up to $n=16$ and performance comparison for different graph ensembles can be found in the Appendix~\ref{App:last}.}
    \label{fig:3}
\end{figure}

To systematically explore how \textsc{Greedy} maneuvers the initialization graph, we compare it to two initialization strategies proposed in the literature: The so-called \textsc{Interp} approach~\cite{zhou2018quantum} interpolates the optimal parameters found for circuit depth $p$ to $p+1$ and uses it as a subsequent initialization. This procedure creates a \textit{smooth parameter pattern} that mimics an annealing schedule. Numerical studies demonstrated that \textsc{Interp} has the same performance as the best out of $2^p$ random initializations. The second method that we use for comparison is the Trotterized quantum annealing (\textsc{TQA}) method~\cite{sack2021quantum}, that initializes QAOA$_p$ using $\gamma_j=(1-\frac{j}{p})\Delta t$ and  $\beta_j=\frac{j}{p}\Delta t$. The step size $\Delta t$ is a free parameter determined in a pre-optimization step.
The \textsc{TQA} has similar performance to \textsc{Interp} at moderate circuit depths, notably having lower computational cost. Obtaining an initialization for QAOA$_p$ within the \textsc{Interp} framework requires running the optimization for all $p'=1,\ldots, p-1$, while in the \textsc{TQA} the search for an optimal $\Delta t$ is performed directly for a given $p$.

Fig.~\ref{fig:3} reveals that the \textsc{Greedy} approach yields similar performance to existing methods. Moreover, the performance of \textsc{TQA} slightly degrades at higher $p$, however, \textsc{Greedy} is fully on par with \textsc{Interp} initialization. The comparable performance between \textsc{Greedy} and earlier heuristic approaches is surprising. Indeed, the \textsc{Greedy} method for QAOA$_{p}$ explores $p+1$ symmetric TSs and chooses the best out of the resulting up to $2(p+1)$ minima (if none are equivalent), in contrast to \textsc{Interp}, which uses a single smooth initialization pattern at every $p$ and thus at a smaller computational cost. 

\subsection{Smooth pattern of variational angles and heuristic initializations}

\begin{figure*}[t]
    \centering
    \includegraphics[width=0.75\textwidth]{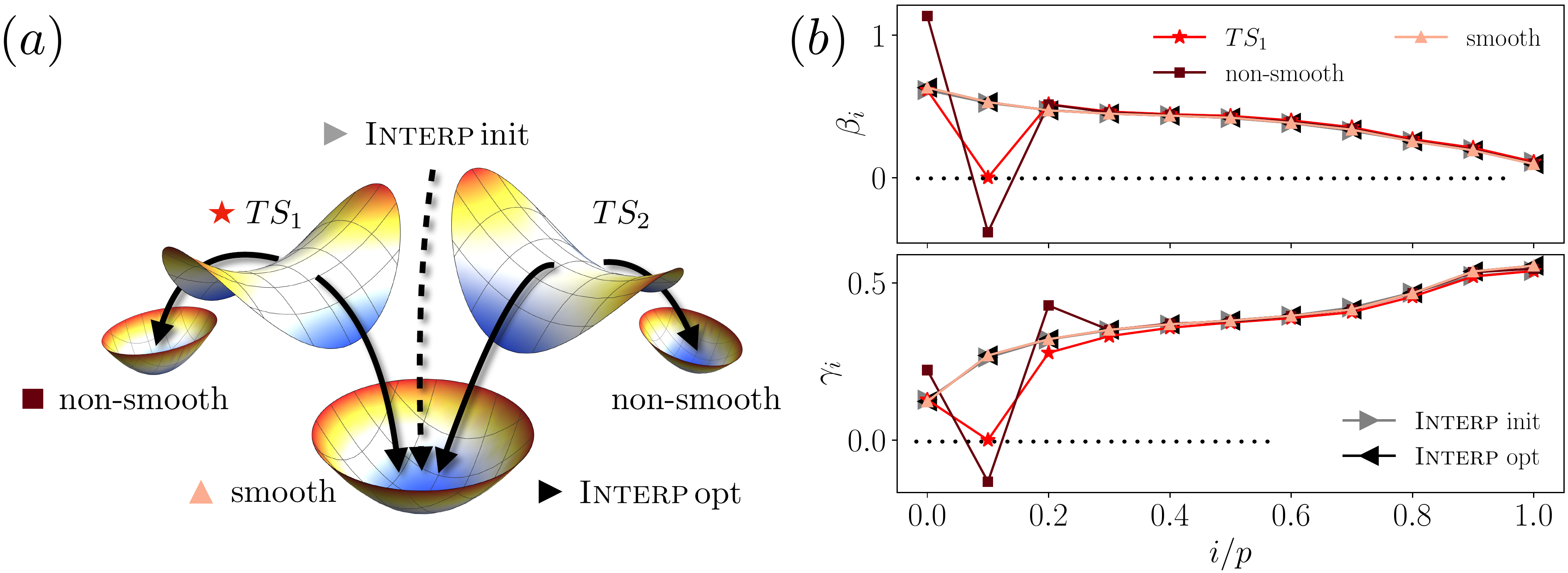}
    \caption{
    \label{fig:4}
    (a) Cartoon of descent from two different TS at of QAOA$_{p+1}$ generated from a QAOA$_p$ minimum with a smooth pattern leads to the same new smooth pattern minima of QAOA$_{p+1}$, also reached from the \textsc{Interp}~\cite{zhou2018quantum} initialization. Two additional non-smooth local minima typically have higher energy. (b) shows the corresponding initial and convergent parameter patterns for the RRG3 graph shown in Fig.~\ref{fig:2} for $p=10$. 
    }
\end{figure*}

We find that having a smooth dependence of the variational angles on $p$ (referred to as a ``smooth pattern'') is an important characteristic for efficiently maneuvering the initialization graph. A smooth pattern means that the variational angles change gradually and continuously as the QAOA depth $p$ increases, without abrupt jumps or discontinuities. This smoothness property can be visually inspected by plotting the variational angles as a function of $p$ and observing whether the curve appears continuous and smooth. Assuming we found a smooth pattern of QAOA$_{p}$, Theorem~\ref{th:ts} produces a TS of QAOA$_{p+1}$ by padding it with zeros, effectively introducing a discontinuity (bump).
Optimization from the TS with such a bump can proceed by rolling down either side of the saddle, see Fig.~\ref{fig:4}(a), finding two new minima. Remarkably, the eigenvector corresponding to the index-1 direction of the Hessian has dominant weight on the variational angles with initially zero value, see~\ref{app:index_1} for details.
Thus descending along the index-1 direction, we can either enhance or heal the resulting discontinuity in the pattern of variational angles. 
As a result, among two new local minima of QAOA$_{p+1}$ one typically exhibits a smooth parameter pattern where the bump was removed, while the other minimum has an enhanced discontinuity, see Fig.~\ref{fig:4}(b) for an example. 
Utilizing these observations in a numerical study, we find that minima exhibiting a non-smooth parameter pattern exhibit usually a worse or the same performance as smooth minima. 
In fact, in the \textsc{Greedy} procedure we find that in most cases, in particular in the beginning of the protocol, smooth minima are selected. However, there are cases where a non-smooth minimum is selected if it exhibits the same energy as the smooth one. 
\textsc{Greedy} then branches off in the optimization graph into a sub-graph involving only non-smooth minima. Usually, this process of branching off is followed by a smaller gain in performance from increasing $p$.

The preferred smoothness of QAOA optimization parameters has been explored in the literature~\cite{zhou2018quantum, mele2022avoiding, wurtz2022counterdiabaticity} and is believed to be linked to quantum annealing~\cite{brady2021optimal} (QA). In QA the ground state of the Hamiltonian $H_C$ is obtained by preparing the ground state of $H_B$ and smoothly evolving the system to $H_C$ such that the system remains in the ground state during the evolution. A fast change, as generated by a bump in the protocol, leads to leakage into excited energy levels and thus decreased overlap with the target ground state of $H_C$. Since the QAOA can be understood as a Trotterized version of QA~\cite{farhi2014quantum, sack2021quantum, zhou2018quantum}, for large $p$, we believe that a similar process is present in the QAOA and thus makes a smooth parameter pattern preferable.

We find that smooth \textsc{Greedy} minima coincide with \textsc{Interp} minima as shown in Fig.~\ref{fig:4}(b). The \textsc{Interp} naturally creates a smooth parameter pattern since the minima found at $p$ is interpolated to a QAOA$_{p+1}$ initialization. The optimizer only slightly alters the parameters from its initial value, as can be seen in Fig.~\ref{fig:4}(b). 
Geometrically, the \textsc{Interp} initialization can be obtained from the symmetric TS constructed by Theorem~\ref{th:ts} as $\bm{\Gamma}^{p+1}_{\text{\textsc{Interp}}}=\frac{1}{p}\sum_{i=1}^{p+1} \bm{\Gamma}^{p+1}_{\text{TS}}(i,i)$. In other words, $\bm{\Gamma}^{p+1}_{\text{\textsc{Interp}}}$ is the \emph{rescaled center of mass point} of all symmetric TS, with the rescaling factor $(p+1)/p$  being physically motivated. Considering the center of mass of all TS smoothens out discontinuities present in individual TS. The re-scaling is related to the notion of ``total time'' of the QAOA, given by the sum of all variational angles, $T=\sum_j |\gamma_j|+|\beta_j|$~\cite{zhou2018quantum, liang2020investigating}, that resembles the total annealing time in the limit $p\to\infty$. This parameter has been shown to scale as $T\sim p$~\cite{sack2021quantum}, naturally explaining the role of factor $(p+1)/p$ in yielding the correct increased total time of QAOA$_{p+1}$.
In other words, the \textsc{Interp} strategy seems to essentially execute a \textsc{Greedy} search without optimizing in the index-1 direction from the TS. This insight lends credence to the success of \textsc{Interp}. However, only \textsc{Greedy} offers a guarantee for performance improvement with increasing $p$, while for \textsc{Interp} this behavior is supported only by numerical simulations.

\section{Discussion}\label{sec:discussion}

In this work we analytically demonstrated that minima of QAOA$_p$ can be used to obtain transition states (TS) for QAOA$_{p+1}$ which are stationary points with a unique negative eigenvalue in the Hessian. 
These TS provide an excellent initialization for QAOA$_{p+1}$, because they connect to two new local minima with lower energy. This construction allows us to visualize how local minima emerge at different energies for increasing circuit depth using an initialization graph. Categorizing the local minima on this graph by their smooth (discontinuous) patterns of variational parameters, we find that the smooth minima achieve the best performance. Incorporating the smooth nature of minima allows us to  establish a relation between the \textsc{Greedy} approach for the exploration of the initialization graph and the best available initialization strategy~\cite{zhou2018quantum}. 

The use of TS and their analytic construction for the study of QAOA provide the first steps towards an in-depth understanding of the full optimization landscape of the QAOA. 
The constructed TS are guaranteed to provide an initialization that improves the QAOA performance, suggesting that our construction may be useful for establishing analytic QAOA performance guarantees~\cite{farhi2014quantum,wurtz2021maxcut, farhi2020needs} for large $p$ in a recursive fashion. Of particular interest is here an analytical understanding of the numerically observed exponential performance improvement with circuit depth. On a practical side, the established relation between heuristic initializations~\cite{zhou2018quantum} and \textsc{Greedy} exploration of TS suggests that our construction of TS may be useful as a starting point for constructing simple initialization strategies in a broader class of quantum variational algorithms, such as the variational quantum eigensolver~\cite{kandala2017hardware-efficient, peruzzo2014vqe} and quantum machine learning~\cite{benedetti2019qml}. 

In addition, our results invite a more complete characterization of the QAOA landscape using the energy landscapes perspective~\cite{wales_2004}. What fraction of minima does our procedure find out of the complete set of QAOA local minima? Are there more TS and are our analytically constructed TS typical? How is the Hessian spectrum distributed at these minima and TS? 
How do these properties depend on the choice of the QAOA classical Hamiltonian, particularly for classical problems with intrinsically hard landscapes~\cite{chou2021limitations}? 
Answering these and related questions will most likely lead to practical ways of further speeding up the QAOA by reducing the overhead of the classical optimization~\cite{weidenfeller2022scaling}. 

\begin{acknowledgements}
We thank V.~Verteletskyi for a joint collaboration on numerical studies of the QAOA  during his internship at ISTA that inspired analytic results on TS reported in this work.
We acknowledge A.~A.~Mele and M.~Brooks for discussions and D.~Egger, P.~Love, and D.~Wierichs for valuable feedback on the manuscript. S.H.S., R.A.M., and M.S.\ acknowledge support by the European Research Council (ERC) under the European Union's Horizon 2020 research and innovation program (Grant Agreement No.~850899).
R.K.\ is supported by the SFB BeyondC (Grant No.~F7107-N38) and the project QuantumReady (FFG 896217).
\end{acknowledgements}
\appendix
\section{Restricting QAOA parameter space by symmetries}\label{appx:symmetries}
In this Appendix, we find the symmetry properties of the cost function 
$$
E(\bm \beta, \bm \gamma)=\langle \bm \beta, \bm \gamma| H_C | \bm \beta, \bm \gamma \rangle
$$ 
for the QAOA$_p$ (i.e.\ QAOA with circuit depth $p$) ansatz. Here we use bold notation for both $\beta$ and $\gamma$ parameters to denote a length-$p$ vector of angles, i.e.\ $\bm{\beta}=(\beta_1,\ldots,\beta_p)$ and $\bm{\gamma}=(\gamma_1,\ldots,\gamma_p)$. The use of symmetries allows to restrict the manifold of variational parameters, leading to a more efficient exploration of the QAOA landscape. This section expands upon previous results by~\cite{zhou2018quantum}.

We begin by rewriting the exponents of both classical and mixing Hamiltonian as:
\begin{align}
	e^{-\mathrm{i}\beta_l H_B}&=\prod_{k=1}^n e^{-\mathrm{i}\beta_l\sigma_k^x}=(\cos{\beta_l}-\mathrm{i}\sin{\beta_l}\, \sigma^x)^{\otimes n},\\
	e^{-\mathrm{i}\gamma_l H_C}&=\prod_{\langle j,k\rangle}e^{-\mathrm{i}\gamma_l \sigma_j^z\sigma_k^z}=\prod_{\langle j,k\rangle}(\cos{\gamma_l}-\mathrm{i}\sin{\gamma_l}\sigma_j^z\sigma_k^z).
\end{align}
From here it is apparent that adding $\pi$ to any of the parameters, $\beta_l, \gamma_l \to \beta_l+\pi, \gamma_l+\pi$ for all $l \in [{1,p}]$ does not change the cost function value $E(\bm \beta, \bm \gamma)$. Indeed, this leads to an appearance of an overall negative sign that cancels within the expectation value of the classical Hamiltonian. Therefore we can easily restrict the search space to  {\bf (i)} $\beta_l, \gamma_l \in[-\frac{\pi}{2},\frac{\pi}{2}]$.

For $\beta$ parameters we can restrict the parameter space even further. In Ref.~\cite{zhou2018quantum} the authors restrict the parameters as $\beta_l \in[-\frac{\pi}{4},\frac{\pi}{4}]$ due to the following considerations. Consider adding $\frac{\pi}{2}$ to $\beta$, the exponent $e^{-\mathrm{i}\left(\beta_l+\frac{\pi}{2}\right) H_B}=e^{-\mathrm{i}\beta_l H_B}e^{-\mathrm{i}\frac{\pi}{2}H_B}$ leads to an additional product of all $\sigma^x$ operators,
\begin{equation}
	e^{-\mathrm{i}\frac{\pi}{2}H_B}=(-\mathrm{i}\sigma^x)^{\otimes n}.
\end{equation}
this operator flips all spins, effectively being a generator of the $Z_2$ symmetry of the classical Ising Hamiltonian, $H_C$. Therefore, such a shift of $\beta_l$ will have no effect on the cost function and we restrict {\bf (ii)} $\beta_l \in[-\frac{\pi}{4},\frac{\pi}{4}]$.

Yet another symmetry is recovered by taking the complex conjugate of the energy. As both classical and mixing Hamiltonians are real-valued, one has
\begin{align}
	E^*(\bm{\beta}, \bm{\gamma})&=\langle \bm \beta, \bm \gamma| H_C |\bm \beta, \bm \gamma \rangle^* = E(-\bm{\beta}, -\bm{\gamma}).
\end{align}
And because the energy is also real-valued ($H_C$ is Hermitian), we recover another symmetry of the cost function: {\bf (iii)} $(\bm{\beta},\bm{\gamma})\to(-\bm{\beta},-\bm{\gamma})$. 

The symmetries {\bf{(i)}}-{\bf{(iii)}} introduced above were discussed in Refs.~\cite{zhou2018quantum, sack2021quantum}. But we can restrict the search space even further. In particular, we demonstrate that for the QAOA cost function for 3-regular random graphs (RRG3) the following \emph{additional} symmetry holds:
\begin{enumerate}
	\item[\bf (iv)] Flipping sign of any of the $\beta_l\to-\beta_l$ for any $l\in [1,p]$ together with shifts of $\gamma_{l,l+1}$ angles, as  $\gamma_{l,l+1}\to\gamma_{l,l+1}\pm\frac{\pi}{2}$. Note that for $l=p$ only the $\gamma_p$ angle has to be shifted.
\end{enumerate}
Let us prove this property for regular graphs with odd connectivity (i.e. 3-regular, 5-regular, \ldots).  In order to demonstrate the property {\bf{(iv)}} for $j<p$, it is enough to show that:
\begin{equation}
	e^{-\mathrm{i}\frac{\pi}{2}H_C}e^{\mathrm{i}\beta H_B}e^{-\mathrm{i}\frac{\pi}{2}H_C}\sim e^{-\mathrm{i}\beta H_B},
\end{equation}
where $\sim$ stands for equivalence up to a global phase. In other words, we use the property that $e^{-\mathrm{i}\frac{\pi}{2}H_C}\sim \prod_i \sigma^z_i$ acts as a product of $\sigma^z$ operators over all spins, that relies on the fact that each vertex is connected to an odd number of edges (interaction terms). This leads to the relation
\begin{equation}
	e^{-\mathrm{i}\frac{\pi}{2}H_C}e^{\mathrm{i}\beta H_B}e^{-\mathrm{i}\frac{\pi}{2}H_C}
	\sim e^{-\mathrm{i}\beta H_B}.
\end{equation}
Thus, the change of sign of $\beta_k$ can be compensated by the shifts of ``adjacent'' angles $\gamma_{k,k+1}$ by $\pi/2$, leading to the property {\bf{(iv)}} when $j<p$. In the particular case of $j=p$, the property {\bf{(iv)}} for $j=p$ is obtained using the following relation 
\begin{align}
	&e^{\mathrm{i}\frac{\pi}{2}H_C}e^{-\mathrm{i}\beta H_B}H_Ce^{\mathrm{i}\beta H_B}e^{-\mathrm{i}\frac{\pi}{2}H_C} \\
	\sim & e^{\mathrm{i}\frac{\pi}{2}H_C}e^{-\mathrm{i}\beta H_B}e^{\mathrm{i}\frac{\pi}{2}H_C}H_Ce^{-\mathrm{i}\frac{\pi}{2}H_C}e^{\mathrm{i}\beta H_B}e^{-\mathrm{i}\frac{\pi}{2}H_C}\\
	= & e^{\mathrm{i}\beta H_B}H_Ce^{-\mathrm{i}\beta H_B}.
\end{align}
 
Finally, let us rewrite the property {\bf{(iv)}} by sequentially applying this symmetry for all indices $j$ starting from $k$ and ending at $p$. Then we obtain the following property equivalent to {\bf{(iv)}} and dubbed {\bf{(iv')}}:
\begin{enumerate}
\item[{\bf (iv')}]$	\forall j=[k,p]
: \ \beta_j\to-\beta_j, \ \gamma_j\to\gamma_j\pm\frac{\pi}{2}$.
\end{enumerate}
This allows us to restrict all $\gamma$ angles to the region $[-\frac{\pi}{4},\frac{\pi}{4}]$. Moreover, the sign-flip symmetry {\bf{(iii)}} allows us to make one of the $\gamma$ angles, for instance, $\gamma_1$, positive, cutting the search space in half.

In addition, let us apply property {\bf{(iv')}} for $k=1$ (i.e. including all layers of the unitary circuit) and supplement it with a global sign flip, operation {\bf{(iii)}}. As a result, we obtain the following symmetry:
\begin{equation}
	\gamma_1\to\pm\frac{\pi}{2}-\gamma_1,\ \forall j=[2,p]: \ \gamma_j\to-\gamma_j
\end{equation}
This indicates that there is a $p$-dimensional plane in the landscape with coordinates $\bm{\gamma}=(\pm\frac{\pi}{4},\bm{0}_{p-1})$ which acts as a mirror. This plane is characterized by a vanishing gradient of the cost function and the Hessian having $p$ vanishing eigenvalues. However, it is located on the edge of our search space and it has a vanishing expectation value of the cost function, corresponding to the approximation ratio $r=0$, which is very far from the good-quality local minima. 

In summary, collecting all symmetries discussed above, we restrict the fundamental search region to 
\begin{align}
 \beta_l &\in \bigg[-\frac{\pi}{4},\frac{\pi}{4}\bigg], \; \forall l \in [1,p], \\
 0 &< \gamma_1 < \frac{\pi}{4}, \\
\gamma_j &\in \bigg[-\frac{\pi}{4},\frac{\pi}{4}\bigg], \; \forall j \in [2,p].
\end{align}
\section{Construction of transition states} \label{app:ts}
In this section, we show how to use a local minimum of the QAOA$_p$  to construct a set of $2p+1$ transition states (TS) at circuit depth $p+1$. These are stationary points with all but one Hessian eigenvalue being positive. 
More precisely, we show the following statement:

\begin{theorem}[TS construction, full version]
Let $\bm{\Gamma}_{\min}^{p}=( \bm{\beta}^\star, \bm{\gamma}^\star)=( \beta_{1}^\star, \ldots, \beta_{p}^\star, \gamma_{1}^\star, \ldots, \gamma_p^\star)$ be a local minimum of QAOA$_p$. Define the following $2p+1$ points by padding this vector with zeroes at distinguished positions:
\begin{equation}\label{Eq:TS-construction}
\begin{split}
\bm{\Gamma}^{p+1}_{\text{\rm TS}}(i, j) =(
\beta_1^\star, ..., \beta^\star_{j-1}, &0, \beta^\star_{j}, ..., \beta_p^\star, \\
\gamma_1^\star, ..., \gamma^\star_{i-1}, &0, \gamma^\star_{i}, ..., \gamma_p^\star)
\end{split}
\end{equation}
with $i \in \left[1,p+1\right]$ and $j=i$ or $j=i+1$. Then each of these points is either (i) a TS for QAOA$_{p+1}$ or (ii) has a non-regular Hessian.
\end{theorem}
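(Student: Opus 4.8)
The plan is to follow the two-step structure sketched after the simplified theorem, but now keeping careful track of every place the Hessian could degenerate, since the statement only promises a transition state \emph{or} a non-regular Hessian.

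First I would show that each of the $2p+1$ padded points $\bm{\Gamma}^{p+1}_{\text{TS}}(i,j)$ is a stationary point of the depth-$(p+1)$ cost function. Inserting $e^{-\im\cdot 0\cdot H_B}=\mathbb{1}$ and $e^{-\im\cdot 0\cdot H_C}=\mathbb{1}$ into the circuit does not change the prepared state, so $|\bm{\Gamma}^{p+1}_{\text{TS}}(i,j)\rangle$ is literally the state $|\bm{\Gamma}^p_{\min}\rangle$. For the $2p$ ``old'' angles the circuit dependence is unchanged, so the corresponding partial derivatives of $E$ coincide with those at $\bm{\Gamma}^p_{\min}$ and hence vanish. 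For the two ``new'' angles one invokes the identities in Eq.~(\ref{eq:grad_TS}): differentiating a freshly inserted identity with respect to its angle produces the generator $-\im H_B$ (or $-\im H_C$) at a location which, after relabeling the layers, is exactly the location of an existing angle, so $\partial_{\text{new}}|\psi\rangle=\partial_{\text{old}}|\psi\rangle$ at the padded point and $\partial_{\text{new}}E$ equals an old partial derivative, which is zero. This yields stationarity for all $2p+1$ points with no curvature input.

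Second, I would bring the $(2p+2)\times(2p+2)$ Hessian into the block form Eq.~(\ref{eq:Hessian-big}); the same ``insert-identity'' bookkeeping shows that its principal $2p\times 2p$ submatrix on the old angles is exactly $H(\bm{\Gamma}^p_{\min})$. Since $\bm{\Gamma}^p_{\min}$ is a local minimum, $H(\bm{\Gamma}^p_{\min})$ is positive semidefinite; if it has a vanishing eigenvalue then so does the big Hessian (alternative (ii)), so I may assume $H(\bm{\Gamma}^p_{\min})\succ 0$. Cauchy's eigenvalue interlacing theorem (Theorem~\ref{th:eigInterlacing}), applied with $r=2$ deleted rows and columns, then forces the eigenvalues $\lambda_3\le\cdots\le\lambda_{2p+2}$ of $H[\bm{\Gamma}^{p+1}_{\text{TS}}(i,j)]$ to be strictly positive, so at most two eigenvalues are non-positive and $\det H[\bm{\Gamma}^{p+1}_{\text{TS}}(i,j)]$ has the sign of $\lambda_1\lambda_2$. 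Thus a negative determinant means exactly one negative eigenvalue (index-1, hence a TS), a zero determinant means a non-regular Hessian (alternative (ii)), and the only thing left to exclude is a strictly positive determinant with regular Hessian (which would be index $0$ or $2$). So the whole theorem reduces to the inequality $\det H[\bm{\Gamma}^{p+1}_{\text{TS}}(i,j)]\le 0$.

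For this I would use the Schur complement: because $H(\bm{\Gamma}^p_{\min})\succ 0$,
\[
\det H\big[\bm{\Gamma}^{p+1}_{\text{TS}}(i,j)\big]=\det H(\bm{\Gamma}^p_{\min})\,\det\!\big(h(i,j)-v(i,j)^T H(\bm{\Gamma}^p_{\min})^{-1}v(i,j)\big),
\]
and the first factor is positive, so it suffices to show the $2\times 2$ effective curvature matrix $S(i,j):=h-v^TH^{-1}v$ is indefinite or singular, i.e. $\det S(i,j)\le 0$. I would compute $h$ and $v$ explicitly by splitting the circuit as $U_{\text{right}}\,e^{-\im\beta_l H_B}e^{-\im\gamma_l H_C}\,|L\rangle$ around the inserted layer: the entries of $h$ and $v$ then become expectation values in the old minimum state $|L\rangle$ of nested commutators of $H_B$, $H_C$ with the Heisenberg-rotated cost operator $U_{\text{right}}^\dagger H_C U_{\text{right}}$ (for instance the diagonal entries of $h$ for the symmetric TS are of double-commutator type, $-\langle L|[H_{B/C},[H_{B/C},U_{\text{right}}^\dagger H_C U_{\text{right}}]]|L\rangle$). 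Crucially, the gradient identities Eq.~(\ref{eq:grad_TS}) express the first-order data of each new angle in terms of an old angle, which ties the columns of $v$ to matrix elements already present inside $H(\bm{\Gamma}^p_{\min})$; it is this interplay, together with $H(\bm{\Gamma}^p_{\min})\succ 0$, that I expect to cancel the ``positive part'' of $h$ and force $\det S(i,j)\le 0$, with equality occurring only in tandem with a vanishing eigenvalue. I would carry this out separately for the families $j=i$, $j=i+1$, and the boundary case $i=j=p+1$, using only Hermiticity of $H_C$ so that the conclusion is Hamiltonian-independent. This determinant-sign step is the main obstacle: it is the only place requiring the explicit second-order structure of the ansatz, and proving $\det S(i,j)\le 0$ uniformly over all $2p+1$ insertions is the technical heart of the argument.
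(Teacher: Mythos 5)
Your skeleton coincides with the paper's (stationarity via the gradient identities, block decomposition of the Hessian, interlacing to cap the number of negative eigenvalues at two, reduction to the sign of the determinant), but the step you yourself flag as ``the main obstacle'' --- showing $\det\big(h-v^{T}H(\bm{\Gamma}^{p}_{\min})^{-1}v\big)\le 0$ for all $2p+1$ insertions --- is exactly the content of the theorem, and your proposal offers only the expectation that the gradient identities will ``cancel the positive part of $h$'' without exhibiting a mechanism. The paper closes this gap with two concrete structural facts. For bulk insertions ($l,k\in[2,p]$) the two columns of $v(l,l)$ are \emph{literally} the $\beta_{l-1}$ and $\gamma_{l}$ columns of the old Hessian $H(\bm{\Gamma}^{p}_{\min})$, the diagonal entries of $h$ equal the corresponding old diagonal entries, and the off-diagonal entry of $h$ differs from $H(\bm{\Gamma}^{p}_{\min})_{\beta_{l-1},\gamma_{l}}$ by a single commutator expectation $\bar b$ obtained from the Jacobi identity; subtracting the corresponding two rows of the enlarged matrix from $v^{T}$ then leaves a block-triangular matrix whose lower $2\times2$ block is $\bigl(\begin{smallmatrix}0&\bar b\\ \bar b&0\end{smallmatrix}\bigr)$, so $\det H\big[\bm{\Gamma}^{p+1}_{\mathrm{TS}}\big]=-\bar b^{2}\det H(\bm{\Gamma}^{p}_{\min})$. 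In your Schur-complement language, $v=H(\bm{\Gamma}^{p}_{\min})P$ with $P$ a column selector, hence $v^{T}H^{-1}v$ is the $2\times2$ principal submatrix of the old Hessian and $S=\bigl(\begin{smallmatrix}0&\bar b\\ \bar b&0\end{smallmatrix}\bigr)$, which is what actually forces $\det S\le0$. For the edge insertions ($l=k=1$ and $l=k=p+1$) a different simplification occurs: one entire column of $v$ and one diagonal entry of $h$ vanish identically (e.g.\ because $[H_{C},U_{p+1}^{\dagger}H_{C}U_{p+1}]=0$ at the padded point), giving $\det H\big[\bm{\Gamma}^{p+1}_{\mathrm{TS}}\big]=-b^{2}\det H(\bm{\Gamma}^{p}_{\min})$ directly. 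Without these identifications your reduction restates the problem rather than solving it.

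Two smaller issues. First, your stationarity argument asserts that every new-angle derivative equals an old-angle derivative ``after relabeling''; this fails for $\partial_{\beta_{1}}$ at $\bm{\Gamma}^{p+1}_{\mathrm{TS}}(1,1)$ and $(1,2)$ and for $\partial_{\gamma_{p+1}}$ at $(p+1,p+1)$, where no old angle sits at the corresponding location. The paper treats these separately, using $H_{B}\ket{+}=n\ket{+}$ (so the derivative is proportional to the state itself) and, at the final layer, that the inserted generator is $H_{C}$ acting on the output state, so the relevant commutator in the gradient vanishes. Second, your claim that a vanishing eigenvalue of $H(\bm{\Gamma}^{p}_{\min})$ forces the bordered Hessian to be singular (and hence falls under alternative (ii)) is not valid for general bordered matrices --- interlacing only yields $\lambda_{1}\le0\le\lambda_{3}$ --- so your reduction to the positive-definite case is unjustified as stated; in the paper's treatment this is moot, since the final formulas $-b^{2}\det H(\bm{\Gamma}^{p}_{\min})$ and $-\bar b^{2}\det H(\bm{\Gamma}^{p}_{\min})$ show that a degenerate old Hessian (like $b=0$ or $\bar b=0$) produces a vanishing determinant and thus lands in case (ii) automatically.
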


Theorem~\ref{th:ts} in the main text is a streamlined version of this statement that does not mention the possibility of degenerate Hessians. We expect that the Hessian matrix of a local minimum of QAOA$_p$ is non-degenerate in the absence of symmetries and provided the circuit is not overparametrized~\cite{larocca2021theory} (if there exists some combination of variational angles, such that its changes do not influence the quantum state, it leads to vanishing eigenvalue of Hessian). Analogously, in the case of the Hessian at the TS of QAOA$_{p+1}$, we numerically find that option (ii) never happens. Below, we relate the two new additional eigenvalues of the Hessian at the TS to the expectation value of a physical operator over the variational state. This expectation value is non-zero in the absence of special symmetries or fine-tuning, providing a physical justification for why we do not observe zero eigenvalues in the Hessian spectra of our TS.

\subsection{Cost function gradient}
Let us start by computing the energy gradient $\nabla E(\bm{\beta}, \bm{\gamma})$. Derivatives of the quantum state with respect to parameters $\beta_l, \gamma_l$ are given by the following expressions:
\begin{equation}
\begin{split}
    \partial_{\beta_l} |\bm{\beta}, \bm{\gamma}\rangle &= -\mathrm{i} U_{>l}H_B U_{\leq l}|+\rangle, \\
    \partial_{\gamma_l} |\bm{\beta}, \bm{\gamma}\rangle &= -\mathrm{i} U_{\geq l} H_C U_{< l}|+\rangle,
\end{split}
\label{eq:grad_state}
\end{equation}
where $U_{\geq l} = U_B(\beta_p)U_C(\gamma_p)\cdots U_B(\beta_l)U_C(\gamma_l)$, $U_{\leq l}=U_B(\beta_l)U_C(\gamma_l)\cdots U_B(\beta_1)U_C(\gamma_1)$ and analogously for $U_{< l}$, and $U_{>l}$. For simplified notation we use write $\ket{+}$ instead of $\ket{+}^{\otimes n}$. 
We can now deduce the components of the  energy gradient $\nabla E(\bm{\beta}, \bm{\gamma})$ from Eq.~\eqref{eq:grad_state}. They read
\begin{equation}
\begin{split}
    \partial_{\beta_l} E(\bm{\beta}, \bm{\gamma}) &= \mathrm{i} \langle +| U^\dagger_{ \leq l} [  H_B,  U^\dagger_{>l} H_C U_{>l} ] U_{\leq l} |+\rangle,\\
    \partial_{\gamma_l} E(\bm{\beta}, \bm{\gamma}) &= \mathrm{i} \langle +| U^\dagger_{ < l} [  H_C,  U^\dagger_{\geq l} H_C U_{\geq l} ] U_{<l} |+\rangle.
\end{split}
\label{eq:grad_energy}
\end{equation}

Our goal is to prove that given a local minimum $\bm{\Gamma}^p_{\text{min}}  = (\beta_1^\star,\ldots,\beta_p^\star, \gamma_1^\star,\ldots,\gamma_p^\star)$ for a QAOA$_p$ the set of $2p+1$ points 
\begin{equation}
\label{eq:ts2}
\begin{split}
\bm{\Gamma}^{p+1}_{\text{TS}}(l, k) =(\beta_1^\star, ..., \beta^\star_{l-1}, 0, \beta^\star_{l}, ..., \beta_p^\star,
\\
\gamma_1^\star, ..., \gamma^\star_{k-1}, 0, \gamma^\star_{k}, ..., \gamma_p^\star
),
\end{split}
\end{equation}
with $l$ ranging from $1$ to $p+1$ and either $k=l$ or $k=l+1$ are all TSs. 
The first step is to prove that they are all stationary points. That is, each such point leads to a vanishing gradient. From the above expression, it follows that we only have to consider gradient components where the zero insertion is made since the others are zero due to the point $\bm{\Gamma}^p_{\text{min}}$ being a local minimum (i.e. derivatives are vanishing). For the derivatives over newly introduced angles using Eq.~\eqref{eq:grad_state}, we see that
\begin{equation}
\begin{split}
    \partial_{\beta_l}|\bm{\beta}, \bm{\gamma}\rangle_{\big \vert \bm{\Gamma}^{p+1}_{\text{TS}}(l, l)} &= \partial_{\beta_{l-1}}|\bm{\beta}, \bm{\gamma}\rangle_{\big \vert \bm{\Gamma}^p_{\text{min}}}, \\
    \partial_{\beta_l}|\bm{\beta}, \bm{\gamma}\rangle_{\big \vert \bm{\Gamma}^{p+1}_{\text{TS}}(l, l+1)} &= \partial_{\beta_{l}}|\bm{\beta}, \bm{\gamma}\rangle_{\big \vert \bm{\Gamma}^p_{\text{min}}}, \\
    \partial_{\gamma_l}|\bm{\beta}, \bm{\gamma}\rangle_{\big \vert \bm{\Gamma}^{p+1}_{\text{TS}}(l, l)} &= \partial_{\gamma_{l}}|\bm{\beta}, \bm{\gamma}\rangle_{\big \vert \bm{\Gamma}^p_{\text{min}}}, \\
    \partial_{\gamma_{l+1}}|\bm{\beta}, \bm{\gamma}\rangle_{\big \vert \bm{\Gamma}^{p+1}_{\text{TS}}(l, l+1)} &= \partial_{\gamma_{l}}|\bm{\beta}, \bm{\gamma}\rangle_{\big \vert \bm{\Gamma}^p_{\text{min}}}, 
\end{split}
\label{eq:grad_TS}
\end{equation}
where the index $l$ ranges from $1$ to $p+1$ for the $(l,l)$ case and from $1$ to $p$ in the $(l,l+1)$ case. 

These observations reduce the derivatives over the new angles to derivatives over angles from local minima of QAOA$_{p}$. And these vanish by definition because we started in a local minimum which is itself a stationary point, that is 
\begin{equation}
    \nabla E(\bm{\beta}, \bm{\gamma})_{\big \vert \bm{\Gamma}^p_{\text{min}}}=0.
\end{equation}
We emphasize that these arguments do not apply to two special cases that should be treated separately. 

In particular, Eq.~\eqref{eq:grad_state} does not provide any information for: {\bf{(i)}} the gradient component $\partial_{\beta_1}[\cdot]$ when considering TS $\bm{\Gamma}^{p+1}_{\text{TS}}(1, 1)$ and $\bm{\Gamma}^{p+1}_{\text{TS}}(1, 2)$, and 
{\bf{(ii)}} the gradient component $\partial_{\gamma_{p+1}}[\cdot]$ when considering points $\bm{\Gamma}^{p+1}_{\text{TS}}(p+1, p+1)$. For case  {\bf{(i)}}, we use that $H_B|+\rangle = n|+\rangle$ with $n$ being the number of qubits, to show that
\begin{align}
\partial_{\beta_1}|\bm{\beta}, \bm{\gamma}\rangle_{\big \vert \bm{\Gamma}^{p+1}_{\text{TS}}(1, k)} =& -\mathrm{i} n |\bm{\beta}, \bm{\gamma}\rangle_{\big \vert \bm{\Gamma}^{p}_{\text{min}}}
\end{align}
for $k=1,2$. This in turn implies
\begin{align}
\partial_{\beta_{1}} E(\bm{\beta}, \bm{\gamma})_{\big \vert \bm{\Gamma}^{1}_{\text{TS}}(1, k)} =& (\mathrm{i} n - \mathrm{i} n) \langle \bm{\beta}, \bm{\gamma} | \bm{\beta}, \bm{\gamma} \rangle_{\big \vert \bm{\Gamma}^{p}_{\text{min}}} = 0, 
\end{align}
as desired. For case {\bf{(ii)}} we have that
\begin{align}
    \partial_{\gamma_{p+1}} |\bm{\beta}, \bm{\gamma}\rangle_{\big \vert \bm{\Gamma}^{p+1}_{\text{TS}}(p+1, p+1)} = - \mathrm{i} H_C | \bm{\beta}, \bm{\gamma}\rangle_{\big \vert \bm{\Gamma}^{p}_{\text{min}}},
    \end{align}
which handles the second special case:
\begin{align}
    \partial_{\gamma_{p+1}} E(\bm{\beta}, \bm{\gamma})_{\big \vert \bm{\Gamma}^{p+1}_{\text{TS}}(p+1, p+1)} = (\mathrm{i}-\mathrm{i}) E(\bm{\Gamma^p_{\text{min}}}) = 0.   
\end{align}
Putting everything together implies that all energy partial derivatives vanish for every $\bm{\Gamma}^{p+1}_{\mathrm{TS}}$ introduced in Theorem~\ref{th:ts}:
\begin{equation}
\nabla E (\bm{\beta},\bm{\gamma})_{\big \vert \bm{\Gamma}^{p+1}_{\text{TS}}(l,l)}= \nabla E (\bm{\beta},\bm{\gamma})_{\big \vert \bm{\Gamma}^{p+1}_{\text{TS}}(l,l+1)}=0
\end{equation}
for all $l \in [1,p+1]$ except the pair $(p+1,p+2)$ which exceeds the index range. 
In other words: these $2(p+1)-1=2p+1$ points must all be stationary points.

\subsection{Cost function Hessian}
We now proceed with the study of the Hessian for each of the stationary states in the set $\bm{\Gamma}^{p+1}_{\text{TS}}(l, k)$ with $l$ ranging from 1 to $p+1$ and $k$ being $l$ or $l+1$. Using basic row and column operations we  decompose the Hessian as follows:
\begin{equation}
     \label{eq:Hessian-big}
H[\bm{\Gamma}^{p+1}_{\text{TS}}(l, k)]=
\begin{pmatrix}
   H(\bm{\Gamma}^p_{\text{min}}) & v(l,k) \\
  v^T(l,k) & h(l,k)
  \end{pmatrix},
\end{equation}
where $H(\bm{\Gamma}^p_{\text{min}}) \in \mathbb{R}^{2p \times 2p}$, $v(l,k) \in \mathbb{R}^{2p \times 2}$ and, $h(l,k) \in \mathbb{R}^{2 \times 2} $. It is important to note that the determinant of the Hessian at the point $\bm{\Gamma}^{p+1}_{\text{TS}}(l, k)$ remains unchanged by such reordering of rows and columns. To see this, recall that switching two rows or columns causes the determinant to switch signs. Since we switch $x$ rows and $x$ columns, we realize that the overall sign does not change after all. In terms of matrix elements, $v(l,k) \in \mathbb{R}^{2p \times 2}$ reads
\begin{align*}
 v(l,k) =& 
\begin{pmatrix}
\partial_{\beta_1}\partial_{\beta_l} E(\bm{\beta}, \bm{\gamma})_{\big \vert \bm{\Gamma}^{p+1}_{\text{TS}}} & \partial_{\beta_1}\partial_{\gamma_k} E(\bm{\beta}, \bm{\gamma})_{\big \vert \bm{\Gamma}^{p+1}_{\text{TS}}} \\
\vdots & \vdots \\
\partial_{\beta_{l-1}}\partial_{\beta_l} E(\bm{\beta}, \bm{\gamma})_{\big \vert \bm{\Gamma}^{p+1}_{\text{TS}}} & \partial_{\beta_{l-1}}\partial_{\gamma_k} E(\bm{\beta}, \bm{\gamma})_{\big \vert \bm{\Gamma}^{p+1}_{\text{TS}}} \\
\partial_{\beta_{l+1}}\partial_{\beta_l} E(\bm{\beta}, \bm{\gamma})_{\big \vert \bm{\Gamma}^{p+1}_{\text{TS}}} & \partial_{\beta_{l+1}}\partial_{\gamma_k} E(\bm{\beta}, \bm{\gamma})_{\big \vert \bm{\Gamma}^{p+1}_{\text{TS}}} \\
\vdots & \vdots \\
\partial_{\beta_{p+1}}\partial_{\beta_l} E(\bm{\beta}, \bm{\gamma})_{\big \vert \bm{\Gamma}^{p+1}_{\text{TS}}} & \partial_{\beta_{p+1}}\partial_{\gamma_k} E(\bm{\beta}, \bm{\gamma})_{\big \vert \bm{\Gamma}^{p+1}_{\text{TS}}} \\
\partial_{\gamma_1}\partial_{\beta_l} E(\bm{\beta}, \bm{\gamma})_{\big \vert \bm{\Gamma}^{p+1}_{\text{TS}}} & \partial_{\gamma_1}\partial_{\gamma_k} E(\bm{\beta}, \bm{\gamma})_{\big \vert \bm{\Gamma}^{p+1}_{\text{TS}}} \\
\vdots & \vdots \\
\partial_{\gamma_{k-1}}\partial_{\beta_l} E(\bm{\beta}, \bm{\gamma})_{\big \vert \bm{\Gamma}^{p+1}_{\text{TS}}} & \partial_{\gamma_{k-1}}\partial_{\gamma_k} E(\bm{\beta}, \bm{\gamma})_{\big \vert \bm{\Gamma}^{p+1}_{\text{TS}}} \\
\partial_{\gamma_{k+1}}\partial_{\beta_l} E(\bm{\beta}, \bm{\gamma})_{\big \vert \bm{\Gamma}^{p+1}_{\text{TS}}} & \partial_{\gamma_{k+1}}\partial_{\gamma_k} E(\bm{\beta}, \bm{\gamma})_{\big \vert \bm{\Gamma}^{p+1}_{\text{TS}}} \\
\vdots & \vdots \\
\partial_{\gamma_{p+1}}\partial_{\beta_l} E(\bm{\beta}, \bm{\gamma})_{\big \vert \bm{\Gamma}^{p+1}_{\text{TS}}} & \partial_{\gamma_{p+1}}\partial_{\gamma_k} E(\bm{\beta}, \bm{\gamma})_{\big \vert \bm{\Gamma}^{p+1}_{\text{TS}}} 
\end{pmatrix},
\end{align*}
while $h(l,k) \in \mathbb{R}^{2 \times 2}$ becomes
\begin{align*}
h(l,k) &=& \begin{pmatrix}
\partial_{\beta_l}\partial_{\beta_l} E(\bm{\beta}, \bm{\gamma})_{\big \vert \bm{\Gamma}^{p+1}_{\text{TS}}} & \partial_{\beta_l}\partial_{\gamma_k} E(\bm{\beta}, \bm{\gamma})_{\big \vert \bm{\Gamma}^{p+1}_{\text{TS}}} \\
\partial_{\beta_l}\partial_{\gamma_k} E(\bm{\beta}, \bm{\gamma})_{\big \vert \bm{\Gamma}^{p+1}_{\text{TS}}} & \partial_{\gamma_k}\partial_{\gamma_k} E(\bm{\beta}, \bm{\gamma})_{\big \vert \bm{\Gamma}^{p+1}_{\text{TS}}}
\end{pmatrix}.
\end{align*}

Our goal is to restrict the properties of the Hessian~(\ref{eq:Hessian-big}) using the fact that the Hessian at circuit depth $p$ is a positive-definite matrix, a consequence of the fact that we start at a local minimum $\bm{\Gamma}^p_{\text{min}}$. To this end, we use a powerful theorem from matrix analysis.
\begin{theorem}[Eigenvalue interlacing theorem~\cite{matrix_book} (Theorem~4 on page 117)]
\label{th:eigInterlacing}
Let $A \in \mathbb{R}^{n \times n}$ be a symmetric matrix and $B\in \mathbb{R}^{m\times m}$ with $m<n$ be a principal submatrix (obtained by removing both the $i$-th column and $i$-th row for some values of $i$). Suppose $A$ has eigenvalues $\lambda_1 \leq \lambda_2 \leq \cdots \leq \lambda_n$ and $B$ has eigenvalues $\kappa_1 \leq \cdots \leq \kappa_m$. Then \begin{equation}
\lambda_k \leq \kappa_k \leq \lambda_{k+n-m},  
\end{equation}
for $ k={1,m}$.
\end{theorem}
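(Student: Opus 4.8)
The plan is to derive both interlacing inequalities directly from the Courant--Fischer min--max characterisation of the eigenvalues of a real symmetric matrix, applied to $A$ and $B$ simultaneously. The key preliminary observation is that a principal submatrix is just a compression: if $I\subseteq\{1,\dots,n\}$ is the set of retained indices ($|I|=m$) and $P\in\mathbb{R}^{n\times m}$ is the matrix whose columns are the standard basis vectors $e_i$ with $i\in I$, then $P$ has orthonormal columns and $B=P^{T}AP$. Hence for every $0\neq y\in\mathbb{R}^m$ we have $\|Py\|=\|y\|$ and $y^{T}By/y^{T}y=(Py)^{T}A(Py)/\|Py\|^{2}$, i.e.\ the Rayleigh quotients of $B$ are exactly those of $A$ restricted to the $m$-dimensional coordinate subspace $V=\mathrm{range}(P)$. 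I would then fix an orthonormal eigenbasis $u_1,\dots,u_m$ of $B$ with $Bu_i=\kappa_i u_i$ and $\kappa_1\le\cdots\le\kappa_m$.

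For the lower bound $\lambda_k\le\kappa_k$ I would use the form $\lambda_k=\min_{\dim S=k}\max_{0\neq x\in S} x^{T}Ax/x^{T}x$ and feed it the particular $k$-dimensional competitor $S_0=P\cdot\mathrm{span}\{u_1,\dots,u_k\}\subseteq\mathbb{R}^n$. By the compression identity the maximum of the Rayleigh quotient of $A$ over $S_0$ equals the maximum of the Rayleigh quotient of $B$ over $\mathrm{span}\{u_1,\dots,u_k\}$, which is $\kappa_k$; since $\lambda_k$ is the minimum over all such competitors, $\lambda_k\le\kappa_k$.

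For the upper bound $\kappa_k\le\lambda_{k+n-m}$ I would switch to the dual form $\lambda_j=\max_{\dim T=n-j+1}\min_{0\neq x\in T} x^{T}Ax/x^{T}x$ with $j=k+n-m$, so that the competing subspaces have dimension $n-j+1=m-k+1$. The natural competitor is $T_0=P\cdot\mathrm{span}\{u_k,u_{k+1},\dots,u_m\}$, of dimension $m-k+1$; by the compression identity the minimum of the Rayleigh quotient of $A$ over $T_0$ equals $\min_{0\neq y\in\mathrm{span}\{u_k,\dots,u_m\}} y^{T}By/y^{T}y=\kappa_k$, and since $\lambda_{k+n-m}$ is the maximum over all such competitors, $\lambda_{k+n-m}\ge\kappa_k$. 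Combining the two bounds gives $\lambda_k\le\kappa_k\le\lambda_{k+n-m}$ for every $k\in\{1,\dots,m\}$, and one checks $1\le k+n-m\le n$ so the right-hand index is always legitimate.

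There is no deep obstacle here --- the statement is the classical Cauchy interlacing (Poincar\'e separation) theorem --- so the only thing to be careful about is bookkeeping: pairing the ``$\min\max$'' form with the lower inequality and the ``$\max\min$'' form with the upper one, and verifying the dimension arithmetic $n-(k+n-m)+1=m-k+1$. If one prefers to avoid the generalised min--max formulas, an equally good route is induction on $n-m$: it suffices to handle the deletion of a single row and column (the case $m=n-1$, giving $\lambda_k\le\kappa_k\le\lambda_{k+1}$, provable either by the same one-line min--max argument or via the secular rational function of the bordered matrix) and then iterate the single-step estimate $n-m$ times.
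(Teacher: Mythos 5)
Your argument is correct, but note that the paper does not actually prove this statement: Theorem~\ref{th:eigInterlacing} is imported verbatim from Bellman's \emph{Introduction to Matrix Analysis} (Theorem~4, p.~117) and used as a black box in the Hessian analysis of Appendix~\ref{app:ts}, so there is no in-paper proof to compare against. What you supply is the standard Courant--Fischer proof of the Cauchy interlacing (Poincar\'e separation) theorem, and it is sound: the observation that $B=P^{T}AP$ with $P$ having orthonormal columns identifies the Rayleigh quotients of $B$ with those of $A$ restricted to the coordinate subspace, the lower bound $\lambda_k\le\kappa_k$ follows from the min--max form with the competitor $P\cdot\mathrm{span}\{u_1,\dots,u_k\}$, the upper bound $\kappa_k\le\lambda_{k+n-m}$ from the max--min form with $P\cdot\mathrm{span}\{u_k,\dots,u_m\}$, and your dimension bookkeeping $n-(k+n-m)+1=m-k+1$ and the index check $1\le k+n-m\le n$ are both right. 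The alternative you mention, induction on $n-m$ through the one-row-deletion case $\lambda_k\le\kappa_k\le\lambda_{k+1}$, is equally valid and is closer in spirit to how the bound is typically quoted; either route fully justifies the way the paper uses the theorem, namely that the $2p\times 2p$ positive-definite block $H(\bm{\Gamma}^p_{\min})$ forces the $(2p+2)\times(2p+2)$ Hessian at the transition state to have at most two negative eigenvalues.
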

The eigenvalue interlacing theorem relates the ordered set of Hessian eigenvalues $\{\lambda^{p+1}_{i}\}$ for QAOA$_{p+1}$ to the Hessian eigenvalues $\{\lambda^{p}_{i}\}$ of  QAOA$_p$ in the following way:
\begin{equation}
\lambda^{p+1}_k \leq \lambda^p_{k} \leq \lambda^{p+1}_{k+2}.
\end{equation}
Using the fact that $H_p(\bm{\Gamma}^p_{\text{min}})$ has $\lambda^p_{k}>0$ for all $k$, we see that the Hessian of QAOA$_{p+1}$  at point $\bm{\Gamma}^{p+1}_{\text{TS}}(l, k)$ has at most two negative eigenvalues, $\lambda^{p+1}_{1},\lambda^{p+1}_{2}<\lambda^p_1$, whereas $0<\lambda^p_1<\lambda^{p+1}_j$ for $j\geq 3$. 
In what follows we establish that among these two eigenvalues, exactly one is negative and the other one is positive. This is achieved by demonstrating that the full Hessian matrix has a negative determinant,
\begin{equation}\label{Eq:det-proof}
    \mathop{\rm det}H\big[\bm{\Gamma}^{p+1}_{\text{TS}}(l, k)\big] < 0,
\end{equation} 
which rules out the possibility that the remaining eigenvalues $\lambda^{p+1}_{1,2}$ have the same sign (which would cancel in the determinant). 

Below we first prove Relation~\eqref{Eq:det-proof} for the cases where the insertion of the zeros is made at the first {\bf{(i)}} or at the last {\bf{(ii)}} layer of the unitary circuit. We then conclude by considering the general case {\bf{(iii)}}, where zeros are inserted in the ``bulk" of the unitary circuit. Moreover, whenever is clear from context, we will drop the indices $(l,k)$ for better readability. Furthermore, for all the cases considered below, we introduce a specific short-hand notation for the following second-order derivative
\begin{equation}
    b = \partial_{\beta_l} \partial_{\gamma_k} E(\bm{\beta}, \bm{\gamma})_{\big \vert \bm{\Gamma}^{p+1}_{\text{TS}}}.
\end{equation}
This matrix element will play a special role in the calculation of $\text{det} H(\bm{\Gamma}^{p+1}_{\text{TS}}(l,k))$. It is important to note, that while the specific expression of $b$ differs for all the stationary points in the set given by Eq.~\eqref{eq:ts2}, it has a non-zero value, $b\neq 0$. Indeed, below we express $b$ as an expectation value of a non-vanishing operator over the QAOA variational state, that is non-zero in the absence of special symmetries. 

\subsubsection{Case {\bf{(i)}}: $l=k=p+1$}
\label{Sec:case1}
The first step is to compute the matrix elements of $v(p+1,p+1)$. From now on we drop the quantifying index and simply write $v$ and $h$ to reduce notational overhead.
The first column of $v$ corresponds to  $v_{\beta_{j},\beta_{p+1}}=\partial_{\beta_{j}}\partial_{\beta_{p+1}}E(\bm{\beta}, \bm{\gamma})$ evaluated at the TS $\bm{\Gamma}^{p+1}_{\text{TS}}$: 
\begin{equation}
\label{eq:case1_vbeta}
\begin{split}
&\partial_{\beta_{j}}\partial_{\beta_{p+1}}E(\bm{\beta}, \bm{\gamma})_{\big \vert \bm{\Gamma}^{p+1}_{\text{TS}}}  =
\\
&\langle+|U_{\leq j}^{\dagger}[U_{>j}^{\dagger}[H_{B},H_{C}]U_{>j},H_{B}]U_{\leq j}|+\rangle = a_j,
\end{split}
\end{equation}
where we introduced the short-hand notation $a_j$ for better readability.
Analogously, considering matrix elements of the form $v_{\gamma_{j},\beta_{p+1}}=\partial_{\gamma_{j}}\partial_{\beta_{p+1}}E(\bm{\beta}, \bm{\gamma})$, we obtain
\begin{multline}
\label{eq:case1_vgamma}
\partial_{\gamma_{j}}\partial_{\beta_{p+1}}E(\bm{\beta}, \bm{\gamma})_{\big \vert \bm{\Gamma}^{p+1}_{\text{TS}}}= \\
\langle +|U_{<j}^{\dagger}[U_{\geq j}^{\dagger}[H_{B},H_{C}]U_{\geq j},H_{C}]U_{<j}|+\rangle = a_{p+1+j}.
\end{multline}
Evaluating the second derivatives on Eq.~\eqref{eq:case1_vbeta} and Eq.~\eqref{eq:case1_vgamma} at $j=p+1$ corresponds to the first column of the $2 \times 2$ matrix $h$. In particular, evaluating Eq.~\eqref{eq:case1_vbeta} at $j=p+1$ leads to $U_{>j}=\mathbb{I}$ and $U_{\leq j}=U$ which in turn implies that 
\begin{multline}
\partial_{\beta_{p+1}}^{2}E(\bm{\beta}, \bm{\gamma})_{\big \vert \bm{\Gamma}^{p+1}_{\text{TS}}}=
\\
\langle \bm{\Gamma}^p_{\text{min}}|[[H_{B},H_{C}],H_{B}]|\bm{\Gamma}^p_{\text{min}}\rangle=a_{p+1}.
\end{multline}
Note that above we used $U_{>p+1}=\mathbb{I}$. This is because when the derivative is taken with respect to the last layer ($p+1$) of the unitary circuit, there is no unitary to the left of it which, in the notation introduced on Eq.\eqref{eq:grad_state} is equivalent to $U_{>p+1}=\mathbb{I}$. Doing the same on Eq.~\eqref{eq:case1_vgamma} gives
\begin{multline}
    \label{eq:case1_b}
    \partial_{\gamma_{p+1}}\partial_{\beta_{p+1}}E(\bm{\beta}, \bm{\gamma})_{\big \vert \bm{\Gamma}^{p+1}_{\text{TS}}}=\\
    \langle \bm{\Gamma}^p_{\text{min}}|[[H_{B},H_{C}],H_{C}]|\bm{\Gamma}^p_{\text{min}}\rangle=b.   
\end{multline}
Finally, let us look at the matrix elements of the form $v_{\beta_j,\gamma_{p+1}}=\partial_{\beta_j}\partial_{\gamma_{p+1}}E(\vec{\beta},\vec{\gamma})$ and analogously $v_{\gamma_j,\gamma_{p+1}}$, corresponding to the second column of $v$.
Let us first inspect $\partial_{\gamma_{p+1}}E(\vec{\beta},\vec{\gamma})$:
\begin{multline}
\partial_{\gamma_{p+1}}E(\bm{\beta}, \bm{\gamma})=\\
\mathrm{i} \langle+|U_{<p+1}^{\dagger}[H_{C},U_{p+1}^{\dagger}H_{C}U_{p+1}]U_{< p+1}|+\rangle.
\end{multline}
When evaluated at point $\bm{\Gamma}^{p+1}_{\text{TS}}$, we obtain that $[H_{C},U_{p+1}^{\dagger}H_{C}U_{p+1}]=0$ since $U_{p+1}=\mathbb{I}$ and $H_C$ commutes with itself. Hence, we see that as long as the second derivative is taken with respect to an element ($\beta$ or $\gamma$) at index $j<p+1$ the final result will be zero. As we already saw in Eq.~\eqref{eq:case1_b}, $\partial_{\gamma_{p+1}}\partial_{\beta_{p+1}}E(\bm{\beta},\bm{\gamma})$
is equal to $b$. Using similar arguments, we show
that $\partial_{\gamma_{p+1}}\partial_{\gamma_{p+1}}E(\bm{\beta},\bm{\gamma})=0$ which corresponds to the $h_{\gamma_{p+1}, \gamma_{p+1}}$ matrix element of $h$.
We are then ready to construct the Hessian at the TS under consideration:
\begin{equation}
H(\bm{\Gamma}_{\text{TS}}^{p+1})=\begin{pmatrix}H(\bm{\Gamma}^{p}_{\text{min}}) & v\\
v^{T} & h
\end{pmatrix},
\end{equation}
with 
\begin{equation}
v^{T}=\begin{pmatrix}
a_1 & \cdots & a_{2p+1}\\
0 & \cdots & 0
\end{pmatrix} \quad \text{and} \quad 
h = \begin{pmatrix}
a_{p+1} & b\\
b & 0
\end{pmatrix}.
\end{equation}
Using the expression for the determinant of a block matrix~\cite{matrix_book}
\begin{equation}
\label{eq:detBlock}
    \mathop{{\rm {det}}}\begin{pmatrix}A & B\\
C & D
\end{pmatrix}=\mathop{{\rm {det}}}(A)\mathop{{\rm {det}}}(D-CA^{-1}B),
\end{equation}
we  rewrite the determinant of the full Hessian as follows
\begin{multline}
\label{eq:hessianDet_case1}
{\rm det}\big[H(\Gamma_{TS}^{p+1})\big]=
\\
{\rm det}\left(\begin{array}{cc}
a_{p+1} & b\\
b & 0
\end{array}\right){\rm det}\big[H(\Gamma_{{\rm min}}^{p})-v h^{-1}v^T\big] \\
=-b^{2} {\rm det}\big[H(\Gamma_{{\rm min}}^{p})\big].
\end{multline}
We used that $v h^{-1}v^T=0$ in the last line. We then see that as long as $b \neq 0$ the determinant of the Hessian at the TS is negative, $\text{det}[H(\Gamma_{TS}^{p+1})]<0$. The explicit expression~(\ref{eq:case1_b}) for $b$ relates it to the expectation value of the commutator $[[H_B,H_C],H_C]$ over the variational wave function. Since this commutator is a non-vanishing operator, its expectation value is generically non-zero, $b\neq0$.  This concludes the proof of Theorem~\ref{th:ts} for the case when zeros are inserted at the last layer of the unitary circuit. 

\subsubsection{Case {\bf{(ii)}}: $l= k = 1$}
As before, we focus on computing the matrix elements of $v=v(1,1)$ and $h=h(1,1)$. Starting from the first column of $v$, with matrix elements $v_{\beta_j, \beta_1}$ and $v_{\gamma_j, \beta_1}$ for $j\in [2, p+1]$ we find
\begin{equation}
\label{eq:case2_vbeta}
    \begin{split}
    &\partial_{\beta_j}\partial_{\beta_1} E(\bm{\beta}, \bm{\gamma})_{\big \vert \bm{\Gamma}_{\text{TS}}^{p+1} } =\\
    &\langle +| [H_B, U^\dagger_{\leq j}[U^\dagger_{>j}H_C U_{>j}, H_B]U_{\leq j}]|+\rangle = 0, \\
    &\partial_{\gamma_j}\partial_{\beta_1} E(\bm{\beta}, \bm{\gamma})_{\big \vert \bm{\Gamma}_{\text{TS}}^{p+1} } =\\
    &\langle +| [H_B, U^\dagger_{<j}[U^\dagger_{\geq j}H_CU_{\geq j}, H_C]U_{<j}]|+\rangle = 0.
    \end{split}
\end{equation}

Moving onto the second column of $v$, with matrix elements $v_{\beta_j, \gamma_1}$ and $v_{\gamma_j, \gamma_1}$ for $j\in [2,p+1]$ we obtain
\begin{equation}
\label{eq:case2_vgamma}
    \begin{split}
    &\partial_{\gamma_1}\partial_{\beta_j} E(\bm{\beta}, \bm{\gamma})_{\big \vert \bm{\Gamma}_{\text{TS}}^{p+1} } = \\
    &\langle +|[H_C, U^\dagger_{\leq j}[U^\dagger_{>j}H_C U_{>j}, H_B]U_{\leq j}]|+\rangle = c_j, \\
    &\partial_{\gamma_j}\partial_{\gamma_1} E(\bm{\beta}, \bm{\gamma})_{\big \vert \bm{\Gamma}_{\text{TS}}^{p+1} } = \\
    &\langle +| [H_c, U^\dagger_{<j}[U^\dagger_{\geq j} H_C U_{\geq j}, H_C]U_{< j}]|+\rangle = c_{p+1+j}
    \end{split}
\end{equation}
where for better readability we introduced the short-hand notation $c_j$ with $j \in [2,p]$. Finally, evaluating the above expressions Eq.~\eqref{eq:case2_vbeta} and Eq.~\eqref{eq:case2_vgamma} at $j=1$ leads to the matrix elements of the $2\times 2$ matrix $h$. Altogether, we find
\begin{equation*}
v^{T}(1, 1)=\begin{pmatrix}
0 & \cdots & 0 \\
c_1 & \cdots & c_{2p+2}
\end{pmatrix},
\quad
h(1, 1)=\begin{pmatrix}0 & b\\
b & c_{p+2}
\end{pmatrix},
\end{equation*}
where 
\begin{equation}
    \label{eq:case2_b}
    b = \langle +|[H_C, [U^\dagger H_C U, H_B]]|+ \rangle
\end{equation}
and the value of $c_{p+2}$ follows from evaluating Eq.~\eqref{eq:case2_vgamma} at $j=1$. 

Invoking once again the expression for the determinant of a block matrix Eq.~\eqref{eq:detBlock} we get
\begin{multline}
   \text{det}\big[H(\Gamma_{TS}^{p+1})\big]=\text{det}\big[H(\Gamma_{{\rm min}}^{p})) \text{det}(h + v^T  H(\Gamma_{{\rm min}}^{p})  v\big) \\
   =\text{det}\bigg[\left(\begin{array}{cc}
   0 & b \\
   b & c_{p+2}
   \end{array}\right) + 
   \left(\begin{array}{cc}
   0 & 0 \\
   0 & \text{const}
   \end{array}\right) \bigg]\text{det}\big[H(\Gamma_{{\rm min}}^{p})\big], \\
=-b^{2}\text{det}\big[H(\Gamma_{{\rm min}}^{p})\big]. 
\end{multline}
Using that the point $\bm{\Gamma}_{{\rm min}}^{p}$ is a local minimum (with the Hessian being non-singular), we see that as long as $b\neq 0$ the determinant of the Hessian at the TS is negative. The fact that the parameter $b$ in Eq.~(\ref{eq:case2_b}) is non-vanishing can be inferred from the similar argument to the one used at the end of Appendix~\ref{Sec:case1}
\subsubsection{Case {\bf{(iii)}}: $l,k \in {2, p}$}
So far we have proven that when the zeros insertion is made at the initial (I) or last (II) layer of the unitary circuit the corresponding points $\bm{\Gamma}^{p+1}_{\text{TS}}$ of QAOA$_{p+1}$  are TS. In both cases, we proved that the determinant of the Hessian of QAOA$_{p+1}$ at the given points is negative. In order to do this, we used that one of the columns of the $2p \times 2$ matrix $v$ was zero which greatly simplified the computation of the determinant. In what follows, we show that these simplifications, unfortunately, do not occur when the zeros insertion is made in the bulk of the unitary circuits. However, we instead observe that the matrix $v(l,k)$ is constructed by taking the $l$-th ($\beta_l$) and $p+1+k$-th ($\gamma_k$) columns of the Hessian of QAOA$_{p}$ at the local minimum $\bm{\Gamma}^p_{\text{min}}$. This fact, together with the invariance of the determinant under linear operations performed on rows or columns leads to the desired result.

We begin by explicitly computing the matrix elements of $h(l,k)$ and $v(l,k)$ and then relating them to matrix elements of the Hessian $H(\bm{\Gamma^p_{\text{min}}})$. For the sake of concreteness, we focus on the particular case of symmetric TS, i.e.\ $k=l$. 
The other case, i.e. $k=l+1$ can be covered by an analogous chain of arguments. 
As before, in what follows we drop the quantifying indices for better readability. Starting from $h$, we obtain
\begin{multline}
    h = \begin{pmatrix}
\partial_{\beta_l}\partial_{\beta_l} E(\bm{\beta}, \bm{\gamma})_{\big \vert \bm{\Gamma}^{p+1}_{\text{TS}}} & \partial_{\beta_l}\partial_{\gamma_l} E(\bm{\beta}, \bm{\gamma})_{\big \vert \bm{\Gamma}^{p+1}_{\text{TS}}} \\
\partial_{\beta_l}\partial_{\gamma_l} E(\bm{\beta}, \bm{\gamma})_{\big \vert \bm{\Gamma}^{p+1}_{\text{TS}}} & \partial_{\gamma_l}\partial_{\gamma_l} E(\bm{\beta}, \bm{\gamma})_{\big \vert \bm{\Gamma}^{p+1}_{\text{TS}}}
\end{pmatrix}  \\
=
\begin{pmatrix}
\partial^2_{\beta_{l-1}}E(\bm{\beta}, \bm{\gamma})_{\big \vert \bm{\Gamma}^p_{\text{min}}} & 
b \\
b & \partial^2_{\gamma_{l}}E(\bm{\beta}, \bm{\gamma})_{\big \vert \bm{\Gamma}^p_{\text{min}}}
\end{pmatrix} \\
= \begin{pmatrix}
H(\bm{\Gamma}^p_{\text{min}})_{\beta_{l-1},\beta_{l-1}} & b \\
b & H(\bm{\Gamma}^p_{\text{min}})_{\gamma_{l},\gamma_{l}}
\end{pmatrix},
\end{multline}
where
\begin{equation}
\label{eq:case3_b}
    b = \langle +|U_{\leq l-1}^{\dagger}[H_{C},[H_{B},U_{>l-1}^{\dagger}H_{C}U_{>l-1}]]U_{\le l-1}|+\rangle.
\end{equation}

One might be tempted by looking at the properties listed in Eq.~\eqref{eq:grad_TS} to relate $\partial_{\beta_l}\partial_{\gamma_l} E(\bm{\beta}, \bm{\gamma})_{\big \vert \bm{\Gamma}^{p+1}_{\text{TS}}}$ to $\partial_{\beta_{l-1}}\partial_{\gamma_l} E(\bm{\beta}, \bm{\gamma})_{\big \vert \bm{\Gamma}^{p}_{\text{min}}}$. However, upon closer inspection, we can see that these are not the same. More specifically, we get
\begin{multline}
    \partial_{\beta_{l-1}}\partial_{\gamma_l} E(\bm{\beta}, \bm{\gamma})_{\big \vert \bm{\Gamma}^{p}_{\text{min}}} = 
    \\
    \langle +| U^\dagger_{\leq l-1} [H_B, [H_C, U^\dagger_{> l-1} H_C U_{> l-1}]] U_{\leq l-1} |+ \rangle.
\end{multline}
Comparing the above expression with Eq.~\eqref{eq:case3_b} we realize that although not equal, they are related via the Jacobi identity 
\begin{equation}
    [A, [B, C]] + [B, [C, A]] + [C, [A, B]] = 0, 
\end{equation}
for operators $A,B$ and $C$. More specifically, we obtain
\begin{equation}
\begin{split}
    &b - \partial_{\beta_{l-1}}\partial_{\gamma_l} E(\bm{\beta}, \bm{\gamma})_{\big \vert \bm{\Gamma}^p_{\text{min}}} = \\
    &\langle + |U_{\leq l-1}^{\dagger}[U_{>l-1}^{\dagger}H_{C}U_{>l-1}, [H_{B}, H_{C}]]U_{\le l-1}|+\rangle = \bar{b}.
\end{split}
\end{equation}

Considering now the matrix elements of $v$ we get
\begin{multline}
    v = 
\begin{pmatrix}
\partial_{\beta_1}\partial_{\beta_{l-1}} E(\bm{\beta}, \bm{\gamma})_{\big \vert \bm{\Gamma}^{p}_{\text{min}}} & \partial_{\beta_1}\partial_{\gamma_l} E(\bm{\beta}, \bm{\gamma})_{\big \vert \bm{\Gamma}^{p}_{\text{min}}} \\
\vdots & \vdots \\
\partial_{\beta_{l-1}}\partial_{\beta_{l-1}} E(\bm{\beta}, \bm{\gamma})_{\big \vert \bm{\Gamma}^{p}_{\text{min}}} & \partial_{\beta_{l-1}}\partial_{\gamma_l} E(\bm{\beta}, \bm{\gamma})_{\big \vert \bm{\Gamma}^{p}_{\text{min}}} \\
\partial_{\beta_{l}}\partial_{\beta_{l-1}} E(\bm{\beta}, \bm{\gamma})_{\big \vert \bm{\Gamma}^{p}_{\text{min}}} & \partial_{\beta_{l}}\partial_{\gamma_l} E(\bm{\beta}, \bm{\gamma})_{\big \vert \bm{\Gamma}^{p}_{\text{min}}} \\
\vdots & \vdots \\
\partial_{\beta_{p}}\partial_{\beta_{l-1}} E(\bm{\beta}, \bm{\gamma})_{\big \vert \bm{\Gamma}^{p}_{\text{min}}} & \partial_{\beta_{p}}\partial_{\gamma_l} E(\bm{\beta}, \bm{\gamma})_{\big \vert \bm{\Gamma}^{p}_{\text{min}}} \\
\partial_{\gamma_1}\partial_{\beta_{l-1}} E(\bm{\beta}, \bm{\gamma})_{\big \vert \bm{\Gamma}^{p}_{\text{min}}} & \partial_{\gamma_1}\partial_{\gamma_l} E(\bm{\beta}, \bm{\gamma})_{\big \vert \bm{\Gamma}^{p}_{\text{min}}} \\
\vdots & \vdots \\
\partial_{\gamma_{l-1}}\partial_{\beta_{l-1}} E(\bm{\beta}, \bm{\gamma})_{\big \vert \bm{\Gamma}^{p}_{\text{min}}} & \partial_{\gamma_{l-1}}\partial_{\gamma_l} E(\bm{\beta}, \bm{\gamma})_{\big \vert \bm{\Gamma}^{p}_{\text{min}}} \\
\partial_{\gamma_{l}}\partial_{\beta_{l-1}} E(\bm{\beta}, \bm{\gamma})_{\big \vert \bm{\Gamma}^{p}_{\text{min}}} & \partial_{\gamma_{l}}\partial_{\gamma_l} E(\bm{\beta}, \bm{\gamma})_{\big \vert \bm{\Gamma}^{p}_{\text{min}}} \\
\vdots & \vdots \\
\partial_{\gamma_{p}}\partial_{\beta_{l-1}} E(\bm{\beta}, \bm{\gamma})_{\big \vert \bm{\Gamma}^{p}_{\text{min}}} & \partial_{\gamma_{p}}\partial_{\gamma_l} E(\bm{\beta}, \bm{\gamma})_{\big \vert \bm{\Gamma}^{p}_{\text{min}}} 
\end{pmatrix} \\
=
\begin{pmatrix}
H(\bm{\Gamma^p_{\text{min}}})_{\beta_1, \beta_{l-1}} & H(\bm{\Gamma^p_{\text{min}}})_{\beta_1, \gamma_{l}} \\
\vdots & \vdots \\
H(\bm{\Gamma^p_{\text{min}}})_{\beta_p, \beta_{l-1}} &
H(\bm{\Gamma^p_{\text{min}}})_{\beta_p, \gamma_{l}} \\
H(\bm{\Gamma^p_{\text{min}}})_{\gamma_1, \beta_{l-1}} &
H(\bm{\Gamma^p_{\text{min}}})_{\gamma_1, \gamma_{l}} \\
\vdots & \vdots \\
H(\bm{\Gamma^p_{\text{min}}})_{\gamma_p, \beta_{l-1}} &
H(\bm{\Gamma^p_{\text{min}}})_{\gamma_p, \gamma_{l}}
\end{pmatrix}.
\end{multline}

Hence, we find that the $2p \times 2$ rectangular matrix $v$ corresponds to the matrix formed by taking columns $H(\bm{\Gamma^p_{\text{min}}})_{m, \beta_{l-1}}$ and $H(\bm{\Gamma^p_{\text{min}}})_{m, \gamma_{l}}$ with $m=1,\ldots, 2p$ of $H(\bm{\Gamma^p_{\text{min}}})$. Using this result and the fact that the determinant is invariant under linear operations performed on rows or columns, we get that
\begin{equation}
    \text{det}(H(\bm{\Gamma^{p+1}_{\text{TS}}})) = \text{det}\begin{pmatrix}
   H(\bm{\Gamma}^p_{\text{min}}) & v(l,k) \\
  0 & \overline{h}(l,l)
  \end{pmatrix},
\end{equation}
where we subtracted rows $H(\bm{\Gamma}^p_{\text{min}})_{\beta_{l-1}, m }$ and  
$H(\bm{\Gamma}^p_{\text{min}})_{\gamma_{l}, m}$ with $m=1, \ldots, 2p$ from $v^T$, and introduced
\begin{equation}
    \overline{h} = \begin{pmatrix}
    0 & \bar b \\
    \bar b & 0
    \end{pmatrix},
\end{equation}
Using once again the expression for the determinant of a block matrix Eq.~\eqref{eq:detBlock}, and the fact that $\text{det}( \overline{h}(l,l)) = -\bar{b}^2$ is negative ($\bar{b}\neq 0$ due to similar argument as in Appendix~\ref{Sec:case1}) we obtain
\begin{equation}
    \text{det}\big[H(\bm{\Gamma}^{p+1}_{\text{TS}})\big] = -\bar{b}^2 \text{det}\big[H(\bm{\Gamma}^{p}_{\text{min}})\big] < 0,
\end{equation}
concluding our proof for the general TS. 

\section{Counting of unique minima}

The number of minima found in the initialization graph construction presented in the main text, na\"ively scales as $N_\text{min}(p)=2^{p-1}p!$. This follows from our recursive construction. Each local minimum of  QAOA$_p$ is used to construct $p+1$ symmetric TS and for each TS we then find two new minima of QAOA$_{p+1}$, see Figs.~\ref{fig:1} and~\ref{fig:2}. This factorial growth is, however, only sustained if every TS produces two new minima that are all distinct from each other.
Numerically, we find that this is not the case and that the number of unique minima is significantly smaller. The increase in the number of unique minima is consistent with an exponential dependence proportional to  $e^{\kappa p}$ [we find that $N_\text{min}(p)$ can be approximated as $N_\text{min}(p)\approx 0.19 e^{0.98 p}$]. However, the limited range of $p$ does not allow us to  completely rule out factorial growth, see Fig.~\ref{fig:supp0a}. The much smaller number of unique minima, compared to the na\"ive counting demonstrates that different TS often lead to similar minima, as illustrated in Fig.~\ref{fig:4}. 

\begin{figure}[b]
    \centering
    \includegraphics[width=\columnwidth]{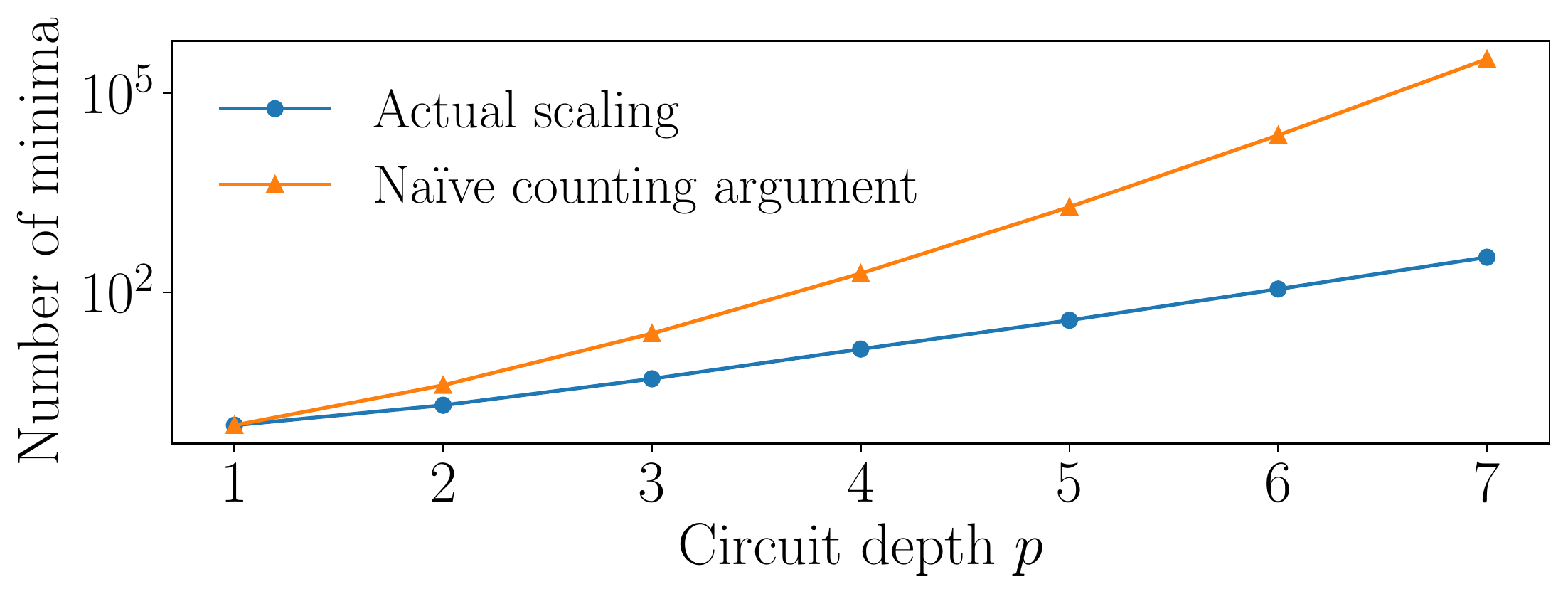}
    \caption{Number of minima found in the initialization graph in Fig.~\ref{fig:2} with system size $n=10$. The orange line describes a na\"ive counting argument ($2^{p-1}p!$) while the blue line lists the actual number of distinct minima that can be approximated as $0.19\, e^{0.98 p}$.
    }
    \label{fig:supp0a}
\end{figure}

\section{Properties of the index-1 direction}\label{app:index_1}

The index-1 direction is the direction of negative curvature at a TS in a QAOA$_{p+1}$ which we use to find two new minima in QAOA$_{p+1}$, as illustrated in Fig.~\ref{fig:2}(a). The index-1 direction is obtained by finding the eigenvector corresponding to the unique negative eigenvalue of the Hessian, $H(\bm{\Gamma}^{p+1}_{TS})$. Numerically we showed in Fig.~\ref{fig:2}(b) that optimization initialized along the $\pm$ index-1 direction either heals or enhances the perturbation introduced by a creation of the TS from the local minima of QAOA$_p$. 

\begin{figure}[h]
    \centering
    \includegraphics[width=0.92\columnwidth]{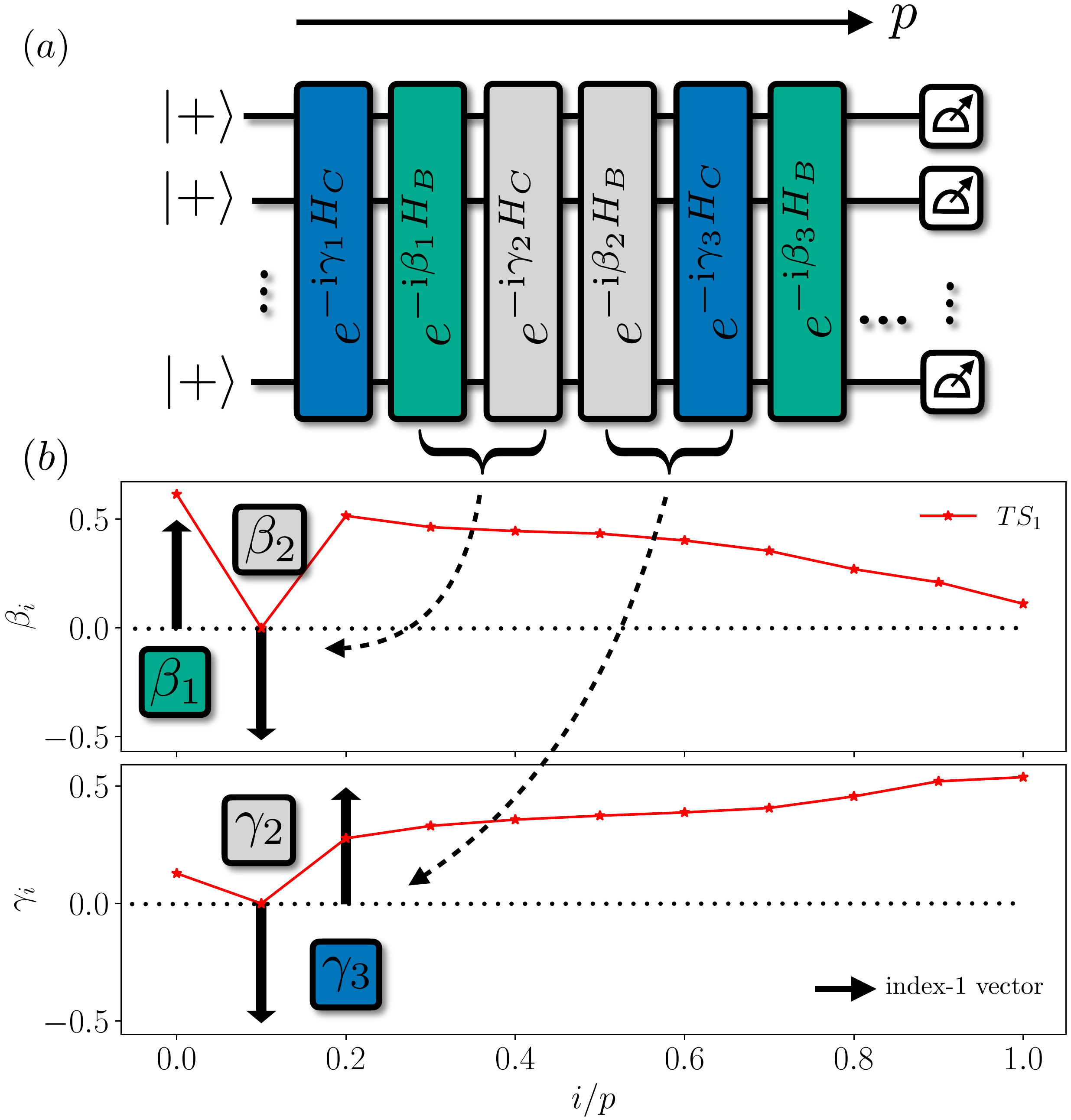}
    \caption{(a) Illustration of the circuit implementing the QAOA at a TS. Gray gates correspond to the zero insertion. The index-1 direction has mainly weight at the position of the zeros as well as the two adjacent gates. (b) Numerical example of the index-1 vector and the QAOA parameter pattern at the TS. Arrows correspond to the magnitude and sign of the entries in the index-1 direction. Only entries at $\beta_1, \beta_2, \gamma_2 \, \text{and} \, \gamma_3$ have a large magnitude, all other entries are nearly zero.
    }
    \label{fig:supp0}
\end{figure}

Interestingly, we find that the index-1 vector has dominant components at positions where zero angles were inserted as well as the positions of adjacent angles. In contrast, all other components of the index-1 vector have nearly zero weight, as illustrated in Fig.~\ref{fig:supp0}. The large contribution along the component corresponding to the zero insertion can be physically motivated by the fact that the gate with the zero parameter does initially not have any effect for driving the initial state $\ket{+}^{\otimes n}$ towards the ground state of $H_C$. Hence, the energy can be lowered by `switching on' the action of this gate by moving the value of the corresponding  variational angle away from zero. Interestingly, we see that the neighboring gates with non-zero parameters are also changed along the index-1 direction. The next nearest neighboring gates appear to be not involved in this process. We note that this numerical observation allows to \textit{a priori} guess the index-1 direction without having to diagonalize the Hessian $H(\bm{\Gamma}^{p+1}_{TS})$. This may be useful for the practical implementation of our initialization on available quantum computers.  

\section{Description of GREEDY algorithm}\label{app:greedy}

In the following, we provide a detailed description for the \textsc{GREEDY} QAOA initialization, as well as the \texttt{subroutines} required to implement the algorithm. To this end, we first provide a pseudo-code for a gradient-based QAOA parameter optimization routine. The algorithm is a so-called variational hybrid algorithm, which implies that the quantum computer is used in a closed feedback loop with a classical computer. There the quantum computer is used to implement a variational state and measure observables while the classical computer is used to keep track of the variational parameters and update them in order to minimize the energy expectation value. 
\begin{algorithm}[H]
\caption{\texttt{QAOA subroutine}}
\label{algo:wbp}
\begin{algorithmic}[1]
\State Given the circuit depth $p$, choose initial parameters $\bm{\Gamma}_{\text{init.}}^p=(\bm{\beta}_{\text{init.}}, \bm{\gamma}_{\text{init.}})$
\Repeat 
\State Implement $\ket{\bm{\beta}, \bm{\gamma}}$ on a quantum device
\State Estimate $E(\bm{\beta}, \bm{\gamma})=\langle \bm{\beta}, \bm{\gamma} |H_C| \bm{\beta}, \bm{\gamma} 
\rangle$ 
\State Estimate gradient $\nabla E(\bm{\beta}, \bm{\gamma})$
\State Update $(\bm{\beta}, \bm{\gamma})$ using gradient information
\Until{$E(\bm{\beta}, \bm{\gamma})$ has converged}
\State Return minimum $\bm{\Gamma}_{\min}^p$
\end{algorithmic}
\end{algorithm}

For very shallow circuit depths, such as $p=1$, the optimization landscape is sufficiently low dimensional and simple such that global optimization routines can be used to find the optimal parameters. One of the most straightforward global optimization routines is the so-called grid search. There, the parameters are initialized on a dense grid and a parameter optimization routine, such as the \texttt{QAOA sub-routine} is carried out for each point in the grid. Then, only the lowest energy local minimum is kept. 
\begin{algorithm}[H]
\caption{\texttt{grid search subroutine}}
\label{algo:wbp}
\begin{algorithmic}[1]
\State Given a circuit depth $p$, construct an evenly spaced grid on the fundamental region:
\begin{equation}\label{eq:symm}
 \beta_{i} \in \bigg[-\frac{\pi}{4},\frac{\pi}{4}\bigg]; 
 \ \,
 \gamma_1 \in \bigg(0,\frac{\pi}{4}\bigg),
 \ \, 
\gamma_{j} \in \bigg[-\frac{\pi}{4},\frac{\pi}{4}\bigg],
\end{equation}
with $i\in [1,p]$ and $j \in [2,p]$
\State \texttt{QAOA subroutine} initialized from each point in grid
\State Return local minimum with the lowest energy $\bm{\Gamma}_{\min}^{p}$
\end{algorithmic}
\end{algorithm}

Using the two \texttt{subroutines} presented above we can provide a detailed pseudo-code for the \textsc{Greedy} QAOA algorithm, see Fig.~\ref{fig:supp1a} for a visualization.
\begin{algorithm}[H]
\caption{\textsc{Greedy QAOA}}
\label{algo:greedy}
\begin{algorithmic}[1]
\State Choose maximum circuit depth $p_{\max}$
\State Choose small offset $\epsilon \ll 1$
\State Grid search for $p=1$ to find $\bm{\Gamma}_{\min}^{p=1}$ \Comment{See \texttt{grid search subroutine}} 
\Repeat 
\State Construct $p+1$ symmetric TS $\bm{\Gamma}_{TS}^{i, p+1}$ from $\bm{\Gamma}_{\min}^p$
\State Compute or approximate the index-1 unit vector $\bm{\hat{v}}$ for each TS
\State Construct points $\bm{\Gamma}_{\pm}^{i, p+1}=\bm{\Gamma}_{TS}^{i, p+1}\pm \epsilon \bm{\hat{v}}_i$ for each TS
\State Run QAOA init. from $\bm{\Gamma}_{\pm}^{i, p+1}$ \Comment{See \texttt{QAOA subroutine}}
\State Keep local minimum with the lowest energy $\bm{\Gamma}_{\min}^{p+1}$
\State $ p \longleftarrow p+1 $
\Until{$p=p_{\max}$}
\State Return minimum $\bm{\Gamma}_{\min}^{p=p_{\max}}$
\end{algorithmic}
\end{algorithm}
The index-1 direction $\bm{\hat{v}}_i$ can either be found explicitly by diagonalizing the Hessian matrix or using the heuristic approximation outlined in the previous section. While explicit diagonalization incurs classical computation costs that scale polynomially with $p$, and thus can be done efficiently, approximation to index-1 direction is expected to give similar performance of QAOA subroutine at a lower classical computational cost.  

\begin{figure}[t]
    \centering
    \includegraphics[width=0.94\columnwidth]{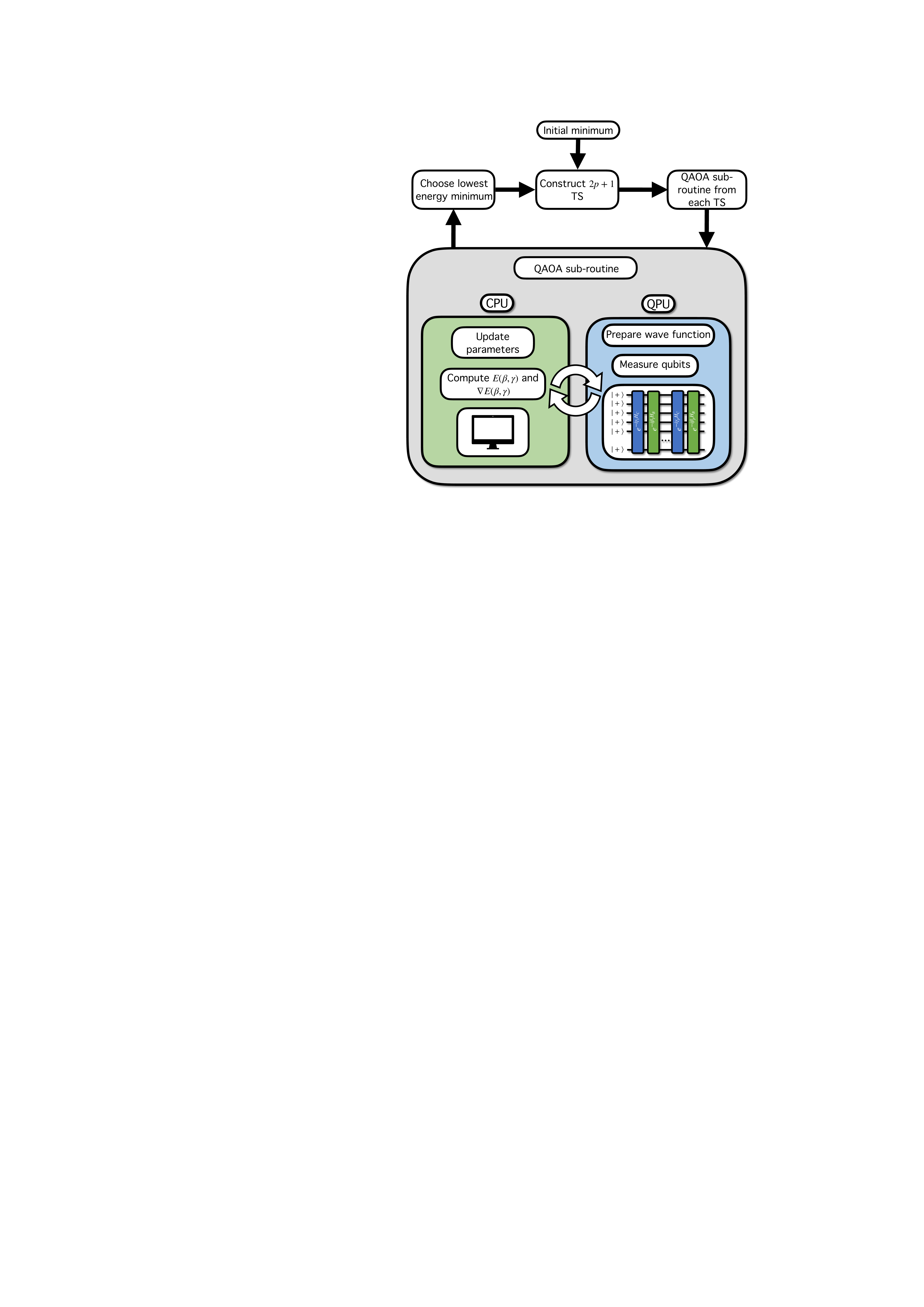}
    \caption{Flow diagram to visualize the \textsc{Greedy} QAOA initialization algorithm presented in Algorithm~\ref{algo:greedy}.}
    \label{fig:supp1a}
\end{figure}

\section{Additional graph ensembles and system size scaling \label{App:last}}

\begin{figure}[t]
    \centering
    \includegraphics[width=0.94\columnwidth]{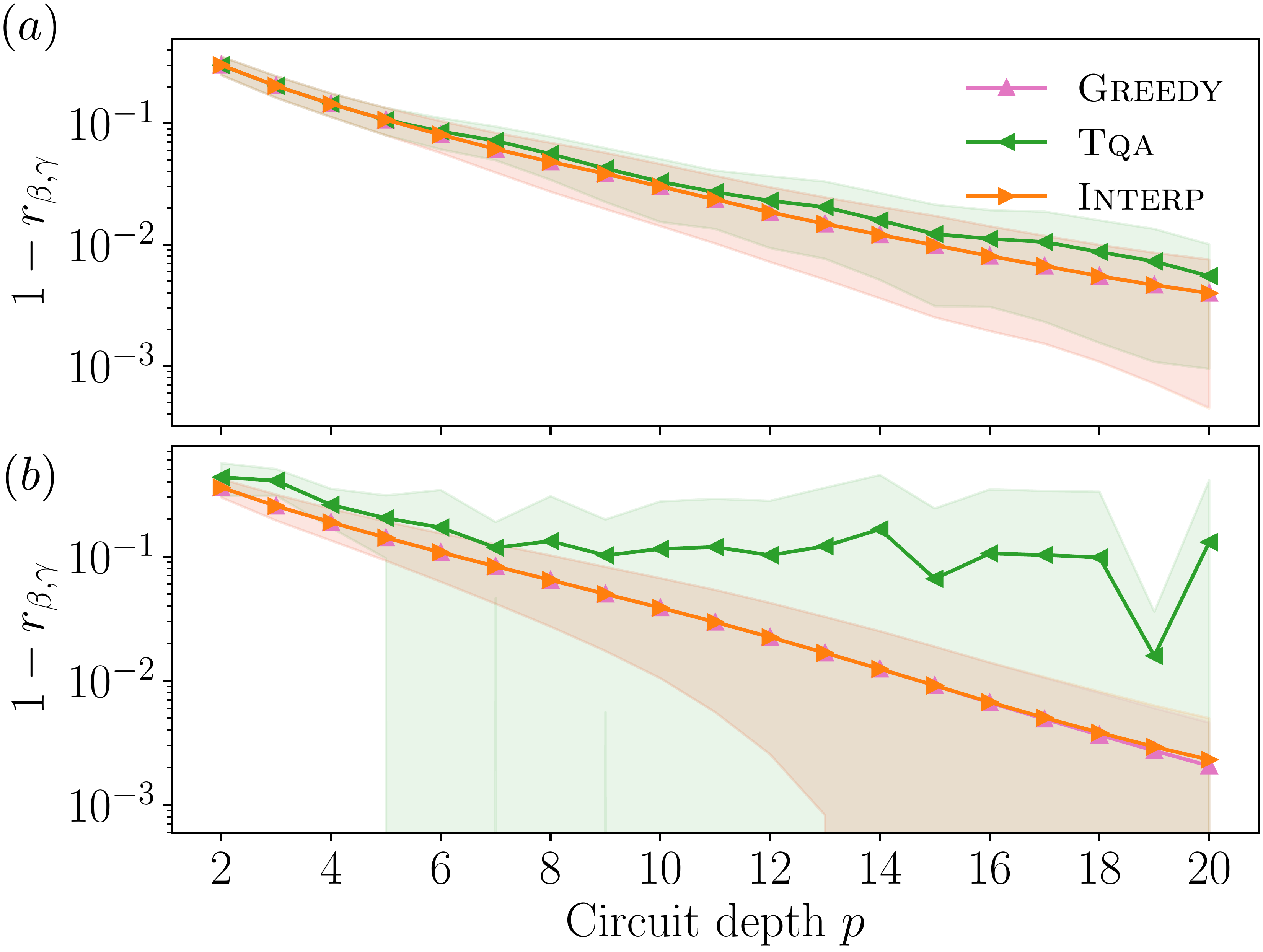}
    \caption{Performance comparison on (a) RWRG3 and (b) RERG with system size $n=10$. Data is averaged over $19$ non-isomorphic graphs.}
    \label{fig:supp1}
\end{figure}

In the main text, we numerically investigated the performance of our method on random 3-regular graphs (RRG3) with system size $n=10$. In the following, we present results for larger system sizes as well as two more graph types. Namely, weighted 3-random regular graphs (RWRG3) where the Hamiltonian is given by $H_C = \sum_{\langle i, j \rangle \in E} w_{ij} \sigma_i^z \sigma_j^z$ and $w_{ij}$ are random weights $w_{i j} \in [0, 1)$, as well as random Erd\H{o}s-R\'enyi graphs (RERG) with edge probability $p_E = 0.5$.

Fig.~\ref{fig:supp1} shows the performance comparison between \textsc{Greedy}, \textsc{TQA}, and \textsc{Interp} on RWRG3 and RERG. We can see that for RWRG3 the performance of the three methods is comparable, while for RERG the \textsc{TQA} performs worse that the other two methods. \textsc{Greedy} and \textsc{Interp} yield (nearly) the same performance for both graph ensembles on the system size that we considered ($n=10$).

\begin{figure}[b]
    \centering
    \includegraphics[width=0.94\columnwidth]{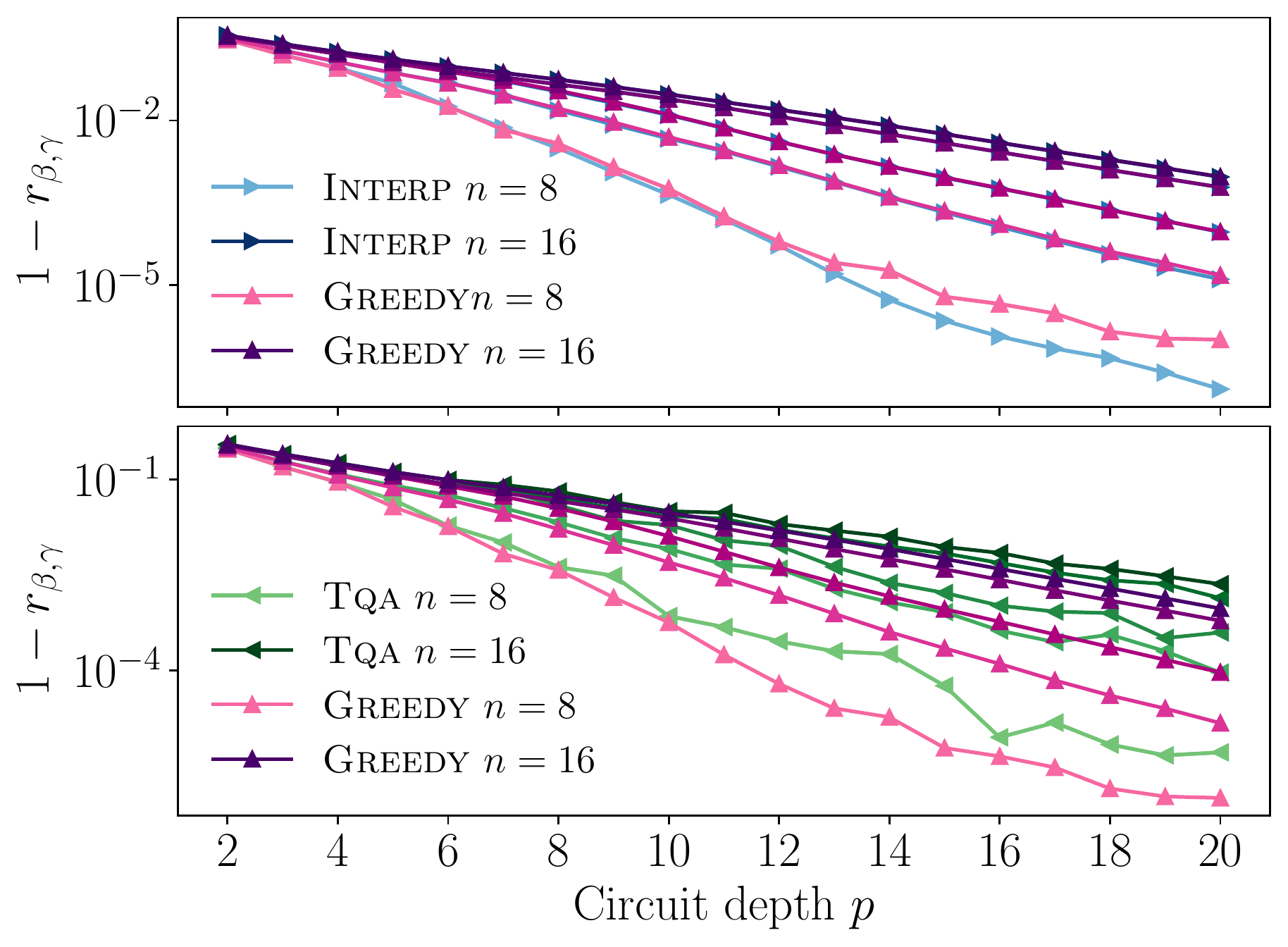}
    \caption{System size scaling for performance comparison on RRG3. Color shade indicates system size, light color is $n=8$ and dark color is $n=16$. System size changes in steps of two between those values. Data is averaged over $19$ non-isomorphic RRG3 graphs. }
    \label{fig:supp2}
\end{figure}

Fig.~\ref{fig:supp2} compares the performance for RRG3 with different system sizes. \textsc{Interp} and \textsc{Greedy} yield very similar performance for smaller system sizes ($n=8$ indicated by light color) while it yields the same performance for larger system sizes ($n=16$ indicated by dark color). \textsc{TQA} performs slightly worse than \textsc{Greedy} and \textsc{Interp} for all system sizes considered. We can furthermore see that gain in performance from every additional layer is becoming less for bigger system sizes. This is due to the fact that in order for the QAOA to ``see" the whole graph, a circuit depth $p$ scaling as $p \sim \log n$ is required~\cite{farhi2020needs}.


\begin{thebibliography}{32}%
\makeatletter
\providecommand \@ifxundefined [1]{%
 \@ifx{#1\undefined}
}%
\providecommand \@ifnum [1]{%
 \ifnum #1\expandafter \@firstoftwo
 \else \expandafter \@secondoftwo
 \fi
}%
\providecommand \@ifx [1]{%
 \ifx #1\expandafter \@firstoftwo
 \else \expandafter \@secondoftwo
 \fi
}%
\providecommand \natexlab [1]{#1}%
\providecommand \enquote  [1]{``#1''}%
\providecommand \bibnamefont  [1]{#1}%
\providecommand \bibfnamefont [1]{#1}%
\providecommand \citenamefont [1]{#1}%
\providecommand \href@noop [0]{\@secondoftwo}%
\providecommand \href [0]{\begingroup \@sanitize@url \@href}%
\providecommand \@href[1]{\@@startlink{#1}\@@href}%
\providecommand \@@href[1]{\endgroup#1\@@endlink}%
\providecommand \@sanitize@url [0]{\catcode `\\12\catcode `\$12\catcode
  `\&12\catcode `\#12\catcode `\^12\catcode `\_12\catcode `\%12\relax}%
\providecommand \@@startlink[1]{}%
\providecommand \@@endlink[0]{}%
\providecommand \url  [0]{\begingroup\@sanitize@url \@url }%
\providecommand \@url [1]{\endgroup\@href {#1}{\urlprefix }}%
\providecommand \urlprefix  [0]{URL }%
\providecommand \Eprint [0]{\href }%
\providecommand \doibase [0]{https://doi.org/}%
\providecommand \selectlanguage [0]{\@gobble}%
\providecommand \bibinfo  [0]{\@secondoftwo}%
\providecommand \bibfield  [0]{\@secondoftwo}%
\providecommand \translation [1]{[#1]}%
\providecommand \BibitemOpen [0]{}%
\providecommand \bibitemStop [0]{}%
\providecommand \bibitemNoStop [0]{.\EOS\space}%
\providecommand \EOS [0]{\spacefactor3000\relax}%
\providecommand \BibitemShut  [1]{\csname bibitem#1\endcsname}%
\let\auto@bib@innerbib\@empty
%</preamble>
\bibitem [{\citenamefont {{Farhi}}\ \emph {et~al.}(2014)\citenamefont
  {{Farhi}}, \citenamefont {{Goldstone}},\ and\ \citenamefont
  {{Gutmann}}}]{farhi2014quantum}%
  \BibitemOpen
  \bibfield  {author} {\bibinfo {author} {\bibfnamefont {E.}~\bibnamefont
  {{Farhi}}}, \bibinfo {author} {\bibfnamefont {J.}~\bibnamefont
  {{Goldstone}}},\ and\ \bibinfo {author} {\bibfnamefont {S.}~\bibnamefont
  {{Gutmann}}},\ }\bibfield  {title} {\bibinfo {title} {{A Quantum Approximate
  Optimization Algorithm}},\ }\href@noop {} {\bibfield  {journal} {\bibinfo
  {journal} {arXiv e-prints}\ ,\ \bibinfo {eid} {arXiv:1411.4028}} (\bibinfo
  {year} {2014})},\ \Eprint {https://arxiv.org/abs/1411.4028} {arXiv:1411.4028
  [quant-ph]} \BibitemShut {NoStop}%
\bibitem [{\citenamefont {{Preskill}}(2018)}]{preskill2018quantum}%
  \BibitemOpen
  \bibfield  {author} {\bibinfo {author} {\bibfnamefont {J.}~\bibnamefont
  {{Preskill}}},\ }\bibfield  {title} {\bibinfo {title} {{Quantum Computing in
  the NISQ era and beyond}},\ }\href@noop {} {\bibfield  {journal} {\bibinfo
  {journal} {arXiv e-prints}\ ,\ \bibinfo {eid} {arXiv:1801.00862}} (\bibinfo
  {year} {2018})},\ \Eprint {https://arxiv.org/abs/1801.00862}
  {arXiv:1801.00862 [quant-ph]} \BibitemShut {NoStop}%
\bibitem [{\citenamefont {Zhou}\ \emph {et~al.}(2020)\citenamefont {Zhou},
  \citenamefont {Wang}, \citenamefont {Choi}, \citenamefont {Pichler},\ and\
  \citenamefont {Lukin}}]{zhou2018quantum}%
  \BibitemOpen
  \bibfield  {author} {\bibinfo {author} {\bibfnamefont {L.}~\bibnamefont
  {Zhou}}, \bibinfo {author} {\bibfnamefont {S.-T.}\ \bibnamefont {Wang}},
  \bibinfo {author} {\bibfnamefont {S.}~\bibnamefont {Choi}}, \bibinfo {author}
  {\bibfnamefont {H.}~\bibnamefont {Pichler}},\ and\ \bibinfo {author}
  {\bibfnamefont {M.~D.}\ \bibnamefont {Lukin}},\ }\bibfield  {title} {\bibinfo
  {title} {Quantum approximate optimization algorithm: Performance, mechanism,
  and implementation on near-term devices},\ }\href
  {https://doi.org/10.1103/PhysRevX.10.021067} {\bibfield  {journal} {\bibinfo
  {journal} {Phys. Rev. X}\ }\textbf {\bibinfo {volume} {10}},\ \bibinfo
  {pages} {021067} (\bibinfo {year} {2020})}\BibitemShut {NoStop}%
\bibitem [{\citenamefont {{Sack}}\ and\ \citenamefont
  {{Serbyn}}(2021)}]{sack2021quantum}%
  \BibitemOpen
  \bibfield  {author} {\bibinfo {author} {\bibfnamefont {S.~H.}\ \bibnamefont
  {{Sack}}}\ and\ \bibinfo {author} {\bibfnamefont {M.}~\bibnamefont
  {{Serbyn}}},\ }\bibfield  {title} {\bibinfo {title} {{Quantum annealing
  initialization of the quantum approximate optimization algorithm}},\
  }\href@noop {} {\bibfield  {journal} {\bibinfo  {journal} {arXiv e-prints}\
  ,\ \bibinfo {eid} {arXiv:2101.05742}} (\bibinfo {year} {2021})},\ \Eprint
  {https://arxiv.org/abs/2101.05742} {arXiv:2101.05742 [quant-ph]} \BibitemShut
  {NoStop}%
\bibitem [{\citenamefont {Egger}\ \emph {et~al.}(2021)\citenamefont {Egger},
  \citenamefont {Mare{\v{c}}ek},\ and\ \citenamefont
  {Woerner}}]{egger2021warmstartingquantum}%
  \BibitemOpen
  \bibfield  {author} {\bibinfo {author} {\bibfnamefont {D.~J.}\ \bibnamefont
  {Egger}}, \bibinfo {author} {\bibfnamefont {J.}~\bibnamefont
  {Mare{\v{c}}ek}},\ and\ \bibinfo {author} {\bibfnamefont {S.}~\bibnamefont
  {Woerner}},\ }\bibfield  {title} {\bibinfo {title} {Warm-starting quantum
  optimization},\ }\href {https://doi.org/10.22331/q-2021-06-17-479} {\bibfield
   {journal} {\bibinfo  {journal} {{Quantum}}\ }\textbf {\bibinfo {volume}
  {5}},\ \bibinfo {pages} {479} (\bibinfo {year} {2021})}\BibitemShut {NoStop}%
\bibitem [{\citenamefont {{Jain}}\ \emph {et~al.}(2021)\citenamefont {{Jain}},
  \citenamefont {{Coyle}}, \citenamefont {{Kashefi}},\ and\ \citenamefont
  {{Kumar}}}]{jain2021graph}%
  \BibitemOpen
  \bibfield  {author} {\bibinfo {author} {\bibfnamefont {N.}~\bibnamefont
  {{Jain}}}, \bibinfo {author} {\bibfnamefont {B.}~\bibnamefont {{Coyle}}},
  \bibinfo {author} {\bibfnamefont {E.}~\bibnamefont {{Kashefi}}},\ and\
  \bibinfo {author} {\bibfnamefont {N.}~\bibnamefont {{Kumar}}},\ }\bibfield
  {title} {\bibinfo {title} {{Graph neural network initialisation of quantum
  approximate optimisation}},\ }\href@noop {} {\bibfield  {journal} {\bibinfo
  {journal} {arXiv e-prints}\ ,\ \bibinfo {eid} {arXiv:2111.03016}} (\bibinfo
  {year} {2021})},\ \Eprint {https://arxiv.org/abs/2111.03016}
  {arXiv:2111.03016 [quant-ph]} \BibitemShut {NoStop}%
\bibitem [{\citenamefont {Wurtz}\ and\ \citenamefont
  {Love}(2022)}]{wurtz2022counterdiabaticity}%
  \BibitemOpen
  \bibfield  {author} {\bibinfo {author} {\bibfnamefont {J.}~\bibnamefont
  {Wurtz}}\ and\ \bibinfo {author} {\bibfnamefont {P.~J.}\ \bibnamefont
  {Love}},\ }\bibfield  {title} {\bibinfo {title} {Counterdiabaticity and the
  quantum approximate optimization algorithm},\ }\href
  {https://doi.org/10.22331/q-2022-01-27-635} {\bibfield  {journal} {\bibinfo
  {journal} {{Quantum}}\ }\textbf {\bibinfo {volume} {6}},\ \bibinfo {pages}
  {635} (\bibinfo {year} {2022})}\BibitemShut {NoStop}%
\bibitem [{\citenamefont {{Brand\~{a}o}}\ \emph {et~al.}(2018)\citenamefont
  {{Brand\~{a}o}}, \citenamefont {{Broughton}}, \citenamefont {{Farhi}},
  \citenamefont {{Gutmann}},\ and\ \citenamefont {{Neven}}}]{brandao2018fixed}%
  \BibitemOpen
  \bibfield  {author} {\bibinfo {author} {\bibfnamefont {F.~G.~S.~L.}\
  \bibnamefont {{Brand\~{a}o}}}, \bibinfo {author} {\bibfnamefont
  {M.}~\bibnamefont {{Broughton}}}, \bibinfo {author} {\bibfnamefont
  {E.}~\bibnamefont {{Farhi}}}, \bibinfo {author} {\bibfnamefont
  {S.}~\bibnamefont {{Gutmann}}},\ and\ \bibinfo {author} {\bibfnamefont
  {H.}~\bibnamefont {{Neven}}},\ }\bibfield  {title} {\bibinfo {title} {{For
  Fixed Control Parameters the Quantum Approximate Optimization Algorithm's
  Objective Function Value Concentrates for Typical Instances}},\ }\href@noop
  {} {\bibfield  {journal} {\bibinfo  {journal} {arXiv e-prints}\ ,\ \bibinfo
  {eid} {arXiv:1812.04170}} (\bibinfo {year} {2018})},\ \Eprint
  {https://arxiv.org/abs/1812.04170} {arXiv:1812.04170 [quant-ph]} \BibitemShut
  {NoStop}%
\bibitem [{\citenamefont {{Akshay}}\ \emph {et~al.}(2021)\citenamefont
  {{Akshay}}, \citenamefont {{Rabinovich}}, \citenamefont {{Campos}},\ and\
  \citenamefont {{Biamonte}}}]{akshay2021paramter}%
  \BibitemOpen
  \bibfield  {author} {\bibinfo {author} {\bibfnamefont {V.}~\bibnamefont
  {{Akshay}}}, \bibinfo {author} {\bibfnamefont {D.}~\bibnamefont
  {{Rabinovich}}}, \bibinfo {author} {\bibfnamefont {E.}~\bibnamefont
  {{Campos}}},\ and\ \bibinfo {author} {\bibfnamefont {J.}~\bibnamefont
  {{Biamonte}}},\ }\bibfield  {title} {\bibinfo {title} {{Parameter
  concentrations in quantum approximate optimization}},\ }\href
  {https://doi.org/10.1103/PhysRevA.104.L010401} {\bibfield  {journal}
  {\bibinfo  {journal} {\pra}\ }\textbf {\bibinfo {volume} {104}},\ \bibinfo
  {eid} {L010401} (\bibinfo {year} {2021})},\ \Eprint
  {https://arxiv.org/abs/2103.11976} {arXiv:2103.11976 [quant-ph]} \BibitemShut
  {NoStop}%
\bibitem [{\citenamefont {{Crooks}}(2018)}]{crooks2018performance}%
  \BibitemOpen
  \bibfield  {author} {\bibinfo {author} {\bibfnamefont {G.~E.}\ \bibnamefont
  {{Crooks}}},\ }\bibfield  {title} {\bibinfo {title} {{Performance of the
  Quantum Approximate Optimization Algorithm on the Maximum Cut Problem}},\
  }\href@noop {} {\bibfield  {journal} {\bibinfo  {journal} {arXiv e-prints}\
  ,\ \bibinfo {eid} {arXiv:1811.08419}} (\bibinfo {year} {2018})},\ \Eprint
  {https://arxiv.org/abs/1811.08419} {arXiv:1811.08419 [quant-ph]} \BibitemShut
  {NoStop}%
\bibitem [{\citenamefont {Wales}(2004)}]{wales_2004}%
  \BibitemOpen
  \bibfield  {author} {\bibinfo {author} {\bibfnamefont {D.}~\bibnamefont
  {Wales}},\ }\href {https://doi.org/10.1017/CBO9780511721724} {\emph {\bibinfo
  {title} {Energy Landscapes: Applications to Clusters, Biomolecules and
  Glasses}}},\ Cambridge Molecular Science\ (\bibinfo  {publisher} {Cambridge
  University Press},\ \bibinfo {year} {2004})\BibitemShut {NoStop}%
\bibitem [{Note1()}]{Note1}%
  \BibitemOpen
  \bibinfo {note} {Note, that on physical grounds we do not consider singular
  Hessians that have one or more vanishing eigenvalues, see Appendix~\ref
  {app:ts}.}\BibitemShut {Stop}%
\bibitem [{\citenamefont {{Streif}}\ \emph {et~al.}(2021)\citenamefont
  {{Streif}}, \citenamefont {{Yarkoni}}, \citenamefont {{Skolik}},
  \citenamefont {{Neukart}},\ and\ \citenamefont {{Leib}}}]{streif2021beating}%
  \BibitemOpen
  \bibfield  {author} {\bibinfo {author} {\bibfnamefont {M.}~\bibnamefont
  {{Streif}}}, \bibinfo {author} {\bibfnamefont {S.}~\bibnamefont {{Yarkoni}}},
  \bibinfo {author} {\bibfnamefont {A.}~\bibnamefont {{Skolik}}}, \bibinfo
  {author} {\bibfnamefont {F.}~\bibnamefont {{Neukart}}},\ and\ \bibinfo
  {author} {\bibfnamefont {M.}~\bibnamefont {{Leib}}},\ }\bibfield  {title}
  {\bibinfo {title} {{Beating classical heuristics for the binary paint shop
  problem with the quantum approximate optimization algorithm}},\ }\href
  {https://doi.org/10.1103/PhysRevA.104.012403} {\bibfield  {journal} {\bibinfo
   {journal} {\pra}\ }\textbf {\bibinfo {volume} {104}},\ \bibinfo {eid}
  {012403} (\bibinfo {year} {2021})},\ \Eprint
  {https://arxiv.org/abs/2011.03403} {arXiv:2011.03403 [quant-ph]} \BibitemShut
  {NoStop}%
\bibitem [{\citenamefont {{Farhi}}\ \emph {et~al.}(2020)\citenamefont
  {{Farhi}}, \citenamefont {{Gamarnik}},\ and\ \citenamefont
  {{Gutmann}}}]{farhi2020needs}%
  \BibitemOpen
  \bibfield  {author} {\bibinfo {author} {\bibfnamefont {E.}~\bibnamefont
  {{Farhi}}}, \bibinfo {author} {\bibfnamefont {D.}~\bibnamefont
  {{Gamarnik}}},\ and\ \bibinfo {author} {\bibfnamefont {S.}~\bibnamefont
  {{Gutmann}}},\ }\bibfield  {title} {\bibinfo {title} {{The Quantum
  Approximate Optimization Algorithm Needs to See the Whole Graph: A Typical
  Case}},\ }\href@noop {} {\bibfield  {journal} {\bibinfo  {journal} {arXiv
  e-prints}\ ,\ \bibinfo {eid} {arXiv:2004.09002}} (\bibinfo {year} {2020})},\
  \Eprint {https://arxiv.org/abs/2004.09002} {arXiv:2004.09002 [quant-ph]}
  \BibitemShut {NoStop}%
\bibitem [{\citenamefont {{Marwaha}}\ and\ \citenamefont
  {{Hadfield}}(2021)}]{marwaha2021bounds}%
  \BibitemOpen
  \bibfield  {author} {\bibinfo {author} {\bibfnamefont {K.}~\bibnamefont
  {{Marwaha}}}\ and\ \bibinfo {author} {\bibfnamefont {S.}~\bibnamefont
  {{Hadfield}}},\ }\bibfield  {title} {\bibinfo {title} {{Bounds on
  approximating Max $k$XOR with quantum and classical local algorithms}},\
  }\href@noop {} {\bibfield  {journal} {\bibinfo  {journal} {arXiv e-prints}\
  ,\ \bibinfo {eid} {arXiv:2109.10833}} (\bibinfo {year} {2021})},\ \Eprint
  {https://arxiv.org/abs/2109.10833} {arXiv:2109.10833 [quant-ph]} \BibitemShut
  {NoStop}%
\bibitem [{\citenamefont {{Wang}}\ \emph {et~al.}(2018)\citenamefont {{Wang}},
  \citenamefont {{Hadfield}}, \citenamefont {{Jiang}},\ and\ \citenamefont
  {{Rieffel}}}]{wang2018fermionic}%
  \BibitemOpen
  \bibfield  {author} {\bibinfo {author} {\bibfnamefont {Z.}~\bibnamefont
  {{Wang}}}, \bibinfo {author} {\bibfnamefont {S.}~\bibnamefont {{Hadfield}}},
  \bibinfo {author} {\bibfnamefont {Z.}~\bibnamefont {{Jiang}}},\ and\ \bibinfo
  {author} {\bibfnamefont {E.~G.}\ \bibnamefont {{Rieffel}}},\ }\bibfield
  {title} {\bibinfo {title} {{Quantum approximate optimization algorithm for
  MaxCut: A fermionic view}},\ }\href
  {https://doi.org/10.1103/PhysRevA.97.022304} {\bibfield  {journal} {\bibinfo
  {journal} {\pra}\ }\textbf {\bibinfo {volume} {97}},\ \bibinfo {eid} {022304}
  (\bibinfo {year} {2018})},\ \Eprint {https://arxiv.org/abs/1706.02998}
  {arXiv:1706.02998 [quant-ph]} \BibitemShut {NoStop}%
\bibitem [{\citenamefont {Bellman}(1997)}]{matrix_book}%
  \BibitemOpen
  \bibfield  {author} {\bibinfo {author} {\bibfnamefont {R.}~\bibnamefont
  {Bellman}},\ }\href@noop {} {\emph {\bibinfo {title} {Introduction to Matrix
  Analysis (2nd Ed.)}}}\ (\bibinfo  {publisher} {Society for Industrial and
  Applied Mathematics},\ \bibinfo {address} {USA},\ \bibinfo {year}
  {1997})\BibitemShut {NoStop}%
\bibitem [{\citenamefont {Broyden}(1970)}]{bfgs1}%
  \BibitemOpen
  \bibfield  {author} {\bibinfo {author} {\bibfnamefont {C.~G.}\ \bibnamefont
  {Broyden}},\ }\bibfield  {title} {\bibinfo {title} {{The Convergence of a
  Class of Double-rank Minimization Algorithms 1. General Considerations}},\
  }\href {https://doi.org/10.1093/imamat/6.1.76} {\bibfield  {journal}
  {\bibinfo  {journal} {IMA Journal of Applied Mathematics}\ }\textbf {\bibinfo
  {volume} {6}},\ \bibinfo {pages} {76} (\bibinfo {year} {1970})}\BibitemShut
  {NoStop}%
\bibitem [{\citenamefont {Fletcher}(1970)}]{bfgs2}%
  \BibitemOpen
  \bibfield  {author} {\bibinfo {author} {\bibfnamefont {R.}~\bibnamefont
  {Fletcher}},\ }\bibfield  {title} {\bibinfo {title} {{A new approach to
  variable metric algorithms}},\ }\href
  {https://doi.org/10.1093/comjnl/13.3.317} {\bibfield  {journal} {\bibinfo
  {journal} {The Computer Journal}\ }\textbf {\bibinfo {volume} {13}},\
  \bibinfo {pages} {317} (\bibinfo {year} {1970})}\BibitemShut {NoStop}%
\bibitem [{\citenamefont {Goldfarb}(1970)}]{bfgs3}%
  \BibitemOpen
  \bibfield  {author} {\bibinfo {author} {\bibfnamefont {D.}~\bibnamefont
  {Goldfarb}},\ }\bibfield  {title} {\bibinfo {title} {A family of
  variable-metric methods derived by variational means},\ }\href
  {http://www.jstor.org/stable/2004873} {\bibfield  {journal} {\bibinfo
  {journal} {Mathematics of Computation}\ }\textbf {\bibinfo {volume} {24}},\
  \bibinfo {pages} {23} (\bibinfo {year} {1970})}\BibitemShut {NoStop}%
\bibitem [{\citenamefont {Shanno}(1970)}]{bfgs4}%
  \BibitemOpen
  \bibfield  {author} {\bibinfo {author} {\bibfnamefont {D.~F.}\ \bibnamefont
  {Shanno}},\ }\bibfield  {title} {\bibinfo {title} {Conditioning of
  quasi-newton methods for function minimization},\ }\href
  {http://www.jstor.org/stable/2004840} {\bibfield  {journal} {\bibinfo
  {journal} {Mathematics of Computation}\ }\textbf {\bibinfo {volume} {24}},\
  \bibinfo {pages} {647} (\bibinfo {year} {1970})}\BibitemShut {NoStop}%
\bibitem [{Note2()}]{Note2}%
  \BibitemOpen
  \bibinfo {note} {Note, that we restrict only to symmetric TS since we
  numerically find no performance gain from including the non-symmetric TS in
  the initialization procedure.}\BibitemShut {Stop}%
\bibitem [{\citenamefont {{Mele}}\ \emph {et~al.}(2022)\citenamefont {{Mele}},
  \citenamefont {{Bigan Mbeng}}, \citenamefont {{Santoro}}, \citenamefont
  {{Collura}},\ and\ \citenamefont {{Torta}}}]{mele2022avoiding}%
  \BibitemOpen
  \bibfield  {author} {\bibinfo {author} {\bibfnamefont {A.~A.}\ \bibnamefont
  {{Mele}}}, \bibinfo {author} {\bibfnamefont {G.}~\bibnamefont {{Bigan
  Mbeng}}}, \bibinfo {author} {\bibfnamefont {G.~E.}\ \bibnamefont
  {{Santoro}}}, \bibinfo {author} {\bibfnamefont {M.}~\bibnamefont
  {{Collura}}},\ and\ \bibinfo {author} {\bibfnamefont {P.}~\bibnamefont
  {{Torta}}},\ }\bibfield  {title} {\bibinfo {title} {{Avoiding barren plateaus
  via transferability of smooth solutions in Hamiltonian Variational Ansatz}},\
  }\href@noop {} {\bibfield  {journal} {\bibinfo  {journal} {arXiv e-prints}\
  ,\ \bibinfo {eid} {arXiv:2206.01982}} (\bibinfo {year} {2022})},\ \Eprint
  {https://arxiv.org/abs/2206.01982} {arXiv:2206.01982 [quant-ph]} \BibitemShut
  {NoStop}%
\bibitem [{\citenamefont {{Brady}}\ \emph {et~al.}(2021)\citenamefont
  {{Brady}}, \citenamefont {{Baldwin}}, \citenamefont {{Bapat}}, \citenamefont
  {{Kharkov}},\ and\ \citenamefont {{Gorshkov}}}]{brady2021optimal}%
  \BibitemOpen
  \bibfield  {author} {\bibinfo {author} {\bibfnamefont {L.~T.}\ \bibnamefont
  {{Brady}}}, \bibinfo {author} {\bibfnamefont {C.~L.}\ \bibnamefont
  {{Baldwin}}}, \bibinfo {author} {\bibfnamefont {A.}~\bibnamefont {{Bapat}}},
  \bibinfo {author} {\bibfnamefont {Y.}~\bibnamefont {{Kharkov}}},\ and\
  \bibinfo {author} {\bibfnamefont {A.~V.}\ \bibnamefont {{Gorshkov}}},\
  }\bibfield  {title} {\bibinfo {title} {{Optimal Protocols in Quantum
  Annealing and Quantum Approximate Optimization Algorithm Problems}},\ }\href
  {https://doi.org/10.1103/PhysRevLett.126.070505} {\bibfield  {journal}
  {\bibinfo  {journal} {\prl}\ }\textbf {\bibinfo {volume} {126}},\ \bibinfo
  {eid} {070505} (\bibinfo {year} {2021})}\BibitemShut {NoStop}%
\bibitem [{\citenamefont {{Liang}}\ \emph {et~al.}(2020)\citenamefont
  {{Liang}}, \citenamefont {{Li}},\ and\ \citenamefont
  {{Leichenauer}}}]{liang2020investigating}%
  \BibitemOpen
  \bibfield  {author} {\bibinfo {author} {\bibfnamefont {D.}~\bibnamefont
  {{Liang}}}, \bibinfo {author} {\bibfnamefont {L.}~\bibnamefont {{Li}}},\ and\
  \bibinfo {author} {\bibfnamefont {S.}~\bibnamefont {{Leichenauer}}},\
  }\bibfield  {title} {\bibinfo {title} {{Investigating quantum approximate
  optimization algorithms under bang-bang protocols}},\ }\href
  {https://doi.org/10.1103/PhysRevResearch.2.033402} {\bibfield  {journal}
  {\bibinfo  {journal} {Physical Review Research}\ }\textbf {\bibinfo {volume}
  {2}},\ \bibinfo {eid} {033402} (\bibinfo {year} {2020})},\ \Eprint
  {https://arxiv.org/abs/2005.13103} {arXiv:2005.13103 [quant-ph]} \BibitemShut
  {NoStop}%
\bibitem [{\citenamefont {{Wurtz}}\ and\ \citenamefont
  {{Love}}(2021)}]{wurtz2021maxcut}%
  \BibitemOpen
  \bibfield  {author} {\bibinfo {author} {\bibfnamefont {J.}~\bibnamefont
  {{Wurtz}}}\ and\ \bibinfo {author} {\bibfnamefont {P.}~\bibnamefont
  {{Love}}},\ }\bibfield  {title} {\bibinfo {title} {{MaxCut quantum
  approximate optimization algorithm performance guarantees for {$p>1$}}},\
  }\href {https://doi.org/10.1103/PhysRevA.103.042612} {\bibfield  {journal}
  {\bibinfo  {journal} {\pra}\ }\textbf {\bibinfo {volume} {103}},\ \bibinfo
  {eid} {042612} (\bibinfo {year} {2021})},\ \Eprint
  {https://arxiv.org/abs/2010.11209} {arXiv:2010.11209 [quant-ph]} \BibitemShut
  {NoStop}%
\bibitem [{\citenamefont {{Kandala}}\ \emph {et~al.}(2017)\citenamefont
  {{Kandala}}, \citenamefont {{Mezzacapo}}, \citenamefont {{Temme}},
  \citenamefont {{Takita}}, \citenamefont {{Brink}}, \citenamefont {{Chow}},\
  and\ \citenamefont {{Gambetta}}}]{kandala2017hardware-efficient}%
  \BibitemOpen
  \bibfield  {author} {\bibinfo {author} {\bibfnamefont {A.}~\bibnamefont
  {{Kandala}}}, \bibinfo {author} {\bibfnamefont {A.}~\bibnamefont
  {{Mezzacapo}}}, \bibinfo {author} {\bibfnamefont {K.}~\bibnamefont
  {{Temme}}}, \bibinfo {author} {\bibfnamefont {M.}~\bibnamefont {{Takita}}},
  \bibinfo {author} {\bibfnamefont {M.}~\bibnamefont {{Brink}}}, \bibinfo
  {author} {\bibfnamefont {J.~M.}\ \bibnamefont {{Chow}}},\ and\ \bibinfo
  {author} {\bibfnamefont {J.~M.}\ \bibnamefont {{Gambetta}}},\ }\bibfield
  {title} {\bibinfo {title} {{Hardware-efficient variational quantum
  eigensolver for small molecules and quantum magnets}},\ }\href
  {https://doi.org/10.1038/nature23879} {\bibfield  {journal} {\bibinfo
  {journal} {\nat}\ }\textbf {\bibinfo {volume} {549}},\ \bibinfo {pages} {242}
  (\bibinfo {year} {2017})},\ \Eprint {https://arxiv.org/abs/1704.05018}
  {arXiv:1704.05018 [quant-ph]} \BibitemShut {NoStop}%
\bibitem [{\citenamefont {{Peruzzo}}\ \emph {et~al.}(2014)\citenamefont
  {{Peruzzo}}, \citenamefont {{McClean}}, \citenamefont {{Shadbolt}},
  \citenamefont {{Yung}}, \citenamefont {{Zhou}}, \citenamefont {{Love}},
  \citenamefont {{Aspuru-Guzik}},\ and\ \citenamefont
  {{O'Brien}}}]{peruzzo2014vqe}%
  \BibitemOpen
  \bibfield  {author} {\bibinfo {author} {\bibfnamefont {A.}~\bibnamefont
  {{Peruzzo}}}, \bibinfo {author} {\bibfnamefont {J.}~\bibnamefont
  {{McClean}}}, \bibinfo {author} {\bibfnamefont {P.}~\bibnamefont
  {{Shadbolt}}}, \bibinfo {author} {\bibfnamefont {M.-H.}\ \bibnamefont
  {{Yung}}}, \bibinfo {author} {\bibfnamefont {X.-Q.}\ \bibnamefont {{Zhou}}},
  \bibinfo {author} {\bibfnamefont {P.~J.}\ \bibnamefont {{Love}}}, \bibinfo
  {author} {\bibfnamefont {A.}~\bibnamefont {{Aspuru-Guzik}}},\ and\ \bibinfo
  {author} {\bibfnamefont {J.~L.}\ \bibnamefont {{O'Brien}}},\ }\bibfield
  {title} {\bibinfo {title} {{A variational eigenvalue solver on a photonic
  quantum processor}},\ }\href {https://doi.org/10.1038/ncomms5213} {\bibfield
  {journal} {\bibinfo  {journal} {Nature Communications}\ }\textbf {\bibinfo
  {volume} {5}},\ \bibinfo {eid} {4213} (\bibinfo {year} {2014})},\ \Eprint
  {https://arxiv.org/abs/1304.3061} {arXiv:1304.3061 [quant-ph]} \BibitemShut
  {NoStop}%
\bibitem [{\citenamefont {{Benedetti}}\ \emph {et~al.}(2019)\citenamefont
  {{Benedetti}}, \citenamefont {{Lloyd}}, \citenamefont {{Sack}},\ and\
  \citenamefont {{Fiorentini}}}]{benedetti2019qml}%
  \BibitemOpen
  \bibfield  {author} {\bibinfo {author} {\bibfnamefont {M.}~\bibnamefont
  {{Benedetti}}}, \bibinfo {author} {\bibfnamefont {E.}~\bibnamefont
  {{Lloyd}}}, \bibinfo {author} {\bibfnamefont {S.}~\bibnamefont {{Sack}}},\
  and\ \bibinfo {author} {\bibfnamefont {M.}~\bibnamefont {{Fiorentini}}},\
  }\bibfield  {title} {\bibinfo {title} {{Parameterized quantum circuits as
  machine learning models}},\ }\href {https://doi.org/10.1088/2058-9565/ab4eb5}
  {\bibfield  {journal} {\bibinfo  {journal} {Quantum Science and Technology}\
  }\textbf {\bibinfo {volume} {4}},\ \bibinfo {eid} {043001} (\bibinfo {year}
  {2019})},\ \Eprint {https://arxiv.org/abs/1906.07682} {arXiv:1906.07682
  [quant-ph]} \BibitemShut {NoStop}%
\bibitem [{\citenamefont {{Chou}}\ \emph {et~al.}(2021)\citenamefont {{Chou}},
  \citenamefont {{Love}}, \citenamefont {{Singh Sandhu}},\ and\ \citenamefont
  {{Shi}}}]{chou2021limitations}%
  \BibitemOpen
  \bibfield  {author} {\bibinfo {author} {\bibfnamefont {C.-N.}\ \bibnamefont
  {{Chou}}}, \bibinfo {author} {\bibfnamefont {P.~J.}\ \bibnamefont {{Love}}},
  \bibinfo {author} {\bibfnamefont {J.}~\bibnamefont {{Singh Sandhu}}},\ and\
  \bibinfo {author} {\bibfnamefont {J.}~\bibnamefont {{Shi}}},\ }\bibfield
  {title} {\bibinfo {title} {{Limitations of Local Quantum Algorithms on Random
  Max-k-XOR and Beyond}},\ }\href@noop {} {\bibfield  {journal} {\bibinfo
  {journal} {arXiv e-prints}\ ,\ \bibinfo {eid} {arXiv:2108.06049}} (\bibinfo
  {year} {2021})},\ \Eprint {https://arxiv.org/abs/2108.06049}
  {arXiv:2108.06049 [quant-ph]} \BibitemShut {NoStop}%
\bibitem [{\citenamefont {{Weidenfeller}}\ \emph {et~al.}(2022)\citenamefont
  {{Weidenfeller}}, \citenamefont {{Valor}}, \citenamefont {{Gacon}},
  \citenamefont {{Tornow}}, \citenamefont {{Bello}}, \citenamefont
  {{Woerner}},\ and\ \citenamefont {{Egger}}}]{weidenfeller2022scaling}%
  \BibitemOpen
  \bibfield  {author} {\bibinfo {author} {\bibfnamefont {J.}~\bibnamefont
  {{Weidenfeller}}}, \bibinfo {author} {\bibfnamefont {L.~C.}\ \bibnamefont
  {{Valor}}}, \bibinfo {author} {\bibfnamefont {J.}~\bibnamefont {{Gacon}}},
  \bibinfo {author} {\bibfnamefont {C.}~\bibnamefont {{Tornow}}}, \bibinfo
  {author} {\bibfnamefont {L.}~\bibnamefont {{Bello}}}, \bibinfo {author}
  {\bibfnamefont {S.}~\bibnamefont {{Woerner}}},\ and\ \bibinfo {author}
  {\bibfnamefont {D.~J.}\ \bibnamefont {{Egger}}},\ }\bibfield  {title}
  {\bibinfo {title} {{Scaling of the quantum approximate optimization algorithm
  on superconducting qubit based hardware}},\ }\href@noop {} {\bibfield
  {journal} {\bibinfo  {journal} {arXiv e-prints}\ ,\ \bibinfo {eid}
  {arXiv:2202.03459}} (\bibinfo {year} {2022})},\ \Eprint
  {https://arxiv.org/abs/2202.03459} {arXiv:2202.03459 [quant-ph]} \BibitemShut
  {NoStop}%
\bibitem [{\citenamefont {{Larocca}}\ \emph {et~al.}(2021)\citenamefont
  {{Larocca}}, \citenamefont {{Ju}}, \citenamefont {{Garc{\'\i}a-Mart{\'\i}n}},
  \citenamefont {{Coles}},\ and\ \citenamefont {{Cerezo}}}]{larocca2021theory}%
  \BibitemOpen
  \bibfield  {author} {\bibinfo {author} {\bibfnamefont {M.}~\bibnamefont
  {{Larocca}}}, \bibinfo {author} {\bibfnamefont {N.}~\bibnamefont {{Ju}}},
  \bibinfo {author} {\bibfnamefont {D.}~\bibnamefont
  {{Garc{\'\i}a-Mart{\'\i}n}}}, \bibinfo {author} {\bibfnamefont {P.~J.}\
  \bibnamefont {{Coles}}},\ and\ \bibinfo {author} {\bibfnamefont
  {M.}~\bibnamefont {{Cerezo}}},\ }\bibfield  {title} {\bibinfo {title}
  {{Theory of overparametrization in quantum neural networks}},\ }\href
  {https://doi.org/10.48550/arXiv.2109.11676} {\bibfield  {journal} {\bibinfo
  {journal} {arXiv e-prints}\ ,\ \bibinfo {eid} {arXiv:2109.11676}} (\bibinfo
  {year} {2021})},\ \Eprint {https://arxiv.org/abs/2109.11676}
  {arXiv:2109.11676 [quant-ph]} \BibitemShut {NoStop}%
\end{thebibliography}
\end{document}